\newtheorem{theorem}{Theorem}
\newtheorem{corollary}[theorem]{Corollary}
\newtheorem{definition}[theorem]{Definition}
\newtheorem{lemma}[theorem]{Lemma}
\newtheorem{proposition}[theorem]{Proposition}
\newenvironment{proof}[1][Proof]{\noindent \textbf{#1.} }{\  \rule{0.5em}{0.5em}}
\begin{document}

\title{Temporal and spatial turnpike-type results under forward
time-monotone performance criteria\thanks{%
This work was presented at seminars and workshops at ETH, King's College,
Princeton, Oxford and Columbia University. The authors would like to thank
the participants for fruitful comments and suggestions.}}
\author{T. Geng\thanks{%
Department of Mathematics, The University of Texas at Austin;
tgeng@math.utexas.edu.} \ and T. Zariphopoulou\thanks{%
Departments of Mathematics and IROM, The University of Texas at Austin and
the Oxford-Man Institute of Quantitative Finance, University of Oxford;
zariphop@math.utexas.edu.}}
\date{First draft: November 2016, this draft: February 2017}
\maketitle

\begin{abstract}
We present turnpike-type results for the risk tolerance function in an
incomplete market It\^{o}-diffusion setting under time-monotone forward
performance criteria. We show that, contrary to the classical case, the
temporal and spatial limits do not coincide. Rather, they depend directly on
the left- and right-end of the support of an underlying measure associated
with the forward performance criterion. We present examples with discrete
and continuous such measures, and discuss the asymptotic behavior of the
risk tolerance for each case.
\end{abstract}

\section{Introduction}

Turnpike results in maximal expected utility models yield the behavior of
optimal portfolio functions when the investment horizon is long, under
asymptotic assumptions on the investor's risk preferences.

The essence of the "turnpike" result (stated, for simplicity, for a single
log-normal stock with coefficients $\mu $ and $\sigma $) is the following:
assume that the investment horizon is $\left[ 0,T\right] $ and that the
investor's utility $U_{T}$ behaves like a power function for large wealth
levels, i.e., 
\begin{equation}
U_{T}\left( x\right) \sim \frac{1}{\gamma }x^{\gamma },\text{ \  \  \  \ }x%
\text{ \  \ large.}  \label{utility-classical}
\end{equation}%
Then, if this horizon is very long, the associated optimal portfolio
function $\pi ^{\ast }\left( x,t;T\right) $ is "close" to the one
corresponding to this power utility, i.e., for \textit{each} $x>0,$ $t\in %
\left[ 0,T\right] ,$

\begin{equation}
\frac{\pi ^{\ast }\left( x,t;T\right) }{x}\sim \frac{\mu }{\sigma ^{2}}\frac{%
1}{1-\gamma },\text{ \  \  \ }T\text{ \  \ large.}  \label{turnpike-classical}
\end{equation}

In other words, the asymptotic \textit{spatial} behavior of the terminal
datum dictates the long-term \textit{temporal} behavior of the portfolio
function for \textit{every} wealth level.

We recall that the function $\pi ^{\ast }\left( x,t;T\right) $ is the one
the determines the optimal wealth process in feedback form, in that the
optimal wealth process $X_{t}^{\ast },$ $t\in \left[ 0,T\right] ,$ is
generated by the investment strategy $\pi _{t}^{\ast }=\pi ^{\ast }\left(
X_{t}^{\ast },t;T\right) .$

Turnpike results can be found in \cite{CH92} where a continuous-time model
was first considered, and the turnpike properties were established using
contingent claim methods. Their results were later extended in \cite{HZ99}
using an autonomous equation that the function $\pi \left( x,t;T\right) $
satisfies and arguments from viscosity solutions. Duality methods were used
in \cite{DRB99} for complete markets and the incomplete market case was
studied in \cite{GKRX14}.

More recently, the authors of \cite{B-Zheng} established the rate of
convergence in a log-normal model, showing that there exist a positive
constant $c$ and a function $D\left( x\right) ,$ such that, for all $x>0,$ 
\begin{equation*}
\left \vert \pi ^{\ast }\left( x,t;T\right) -\frac{\mu }{\sigma ^{2}}\frac{1%
}{1-\gamma }x\right \vert \leq D\left( x\right) e^{-c\left( T-t\right) }.
\end{equation*}

A closer look at these turnpike results yields that we are essentially
working in a \textit{single }investment horizon setting, $\left[ 0,T\right]
, $ which is taken to be very long. As a result, however, one needs to
commit to a market model for this long horizon, but this choice cannot be
modified later on, if time-consistency is desired. Furthermore, one
pre-commits at initial time to a utility function for very far in the
future, $T.$ We also remark that no matter how big $T$ is, the optimal
investment problem is not defined beyond this point, because the utility
function is given for $T$ only.

Herein, we take an alternative point of view. Instead of committing to a
single long horizon $\left[ 0,T\right] $, we define an investment problem
for all times $t\in \left[ 0,\infty \right) $. Moreover, instead of choosing
at an initial time the utility $U_{T}$ for the remote horizon $T,$ we choose
the utility at this initial time. We also depart from the log-normal setting
and work with a general Ito-diffusion multi-security incomplete market model.

We measure the performance of investment strategies via the so-called
forward investment performance criterion. This criterion was introduced by
Musiela and one of the authors in \cite{Musiela-first} and offers
flexibility for performance measurement and risk management under model
adaptation and ambiguity, alternative market views, rolling horizons, and
others. We recall its definition and refer the reader to, among others, \cite%
{MZ-QF}, \cite{MZ3}, \textbf{\ }for an overview of the forward approach.

Herein, we focus on the class of time-monotone forward performance criteria,
studied in \cite{MZ10a} and briefly reviewed in the next section. They are
given by a time-decreasing and adapted to the market information process, $%
U\left( x,t\right) ,$ $\left( x,t\right) \in \mathbb{R}_{+}\times \left[
0,\infty \right) ,$ of the form

\begin{equation*}
U\left( x,t\right) =u\left( x,A_{t}\right) ,\text{ }
\end{equation*}%
where $u\left( x,t\right) $ is a deterministic function (see (\ref{u pde}))
and $A_{t}=\int_{0}^{t}\left \vert \lambda _{s}\right \vert ^{2}ds,$ with
the process $\lambda _{t}$ being the market price of risk. Note that $%
U\left( x,t\right) $ is a compilation of a deterministic investor-specific
input, $u\left( x,t\right) ,$ and a stochastic market-specific input, $%
A_{t}. $

The optimal investment process $\pi _{t}^{\ast }$ turns out to be, for $%
t\geq 0,$ 
\begin{equation}
\pi _{t}^{\ast }=\sigma _{t}^{+}\lambda _{t}r\left( X_{t}^{\ast
},A_{t}\right) \text{ \  \  \  \ with \  \  \ }r\left( x,t\right) :=-\frac{%
u_{x}\left( x,t\right) }{u_{xx}(x,t)},  \label{r-intro}
\end{equation}%
where $\sigma _{t}^{+}$ is the pseudo-inverse of the volatility matrix, and $%
X_{t}^{\ast },$ $t\geq 0,$ the optimal wealth generated by this investment
strategy $\pi _{t}^{\ast }$ (cf. (\ref{wealth})). The function $r\left(
x,t\right) $ is the (local) risk tolerance and will be the main object of
study herein.

Contrary to the classical case, in which a terminal datum is pre-assigned
for $T$ and the solution is then constructed for $t\in \left[ 0,T\right) ,$
in the forward setting, the criterion is defined for all times, starting
with an initial (and not terminal) datum $u_{0}\left( x\right) =U\left(
x,0\right) .$

In analogy to the classical turnpike setting, we are thus motivated to study
the following question: if the initial condition $u_{0}\left( x\right) $ is
such that 
\begin{equation}
u_{0}\left( x\right) \sim \frac{1}{\gamma }x^{\gamma },\text{ \  \  \  \ }x%
\text{ \ large,}  \label{assumption}
\end{equation}%
does this imply that, for each $x>0,$ 
\begin{equation*}
\frac{r(x,t)}{x}\sim \frac{1}{1-\gamma },\  \  \ t\text{ \ large \ }?
\end{equation*}

There are fundamental differences between the classical and the forward
settings, for one is not a mere variation of the other by a time reversal.
Rather, the classical problem is well-posed while the forward is an inverse
problem. Naturally, various properties used for the classical turnpike
results fail, with the most important being the lack of comparison principle
for various PDEs (cf. (\ref{u pde}) and (\ref{r-eqn})) at hand.

The first striking difference between the two settings is the distinct
nature of the temporal and spatial limits. Indeed, in the traditional
turnpike results in \cite{HZ99} and \cite{B-Zheng}, \ the temporal limit in (%
\ref{turnpike-classical}) coincides with the spatial one, in that for fixed
time $T_{0}$ and wealth level $x_{0},$ 
\begin{equation*}
\lim_{x\uparrow \infty }\frac{\pi \left( x,t;T_{0}\right) }{x}%
=\lim_{T\uparrow \infty }\frac{\pi \left( x_{0},t;T\right) }{x_{0}}.
\end{equation*}

However, this is \textit{not }the case in the forward setting. Indeed, the
temporal and spatial limits of the function $\frac{r\left( x,t\right) }{x}$
do \textit{not }\ coincide. This can be seen, for instance, in the
motivational example in section 2.1.

The aim herein then becomes the study of the \textit{spatial }and \textit{%
temporal} limits 
\begin{equation}
\lim_{x\uparrow \infty }\frac{r(x,t_{0})}{x}\text{ \  \  \  \ and \  \  \ }%
\lim_{t\uparrow \infty }\frac{r(x_{0},t)}{x},  \label{limits}
\end{equation}%
for fixed $t_{0}>0,x_{0}>0,$ respectively, under appropriate conditions for
the asymptotic behavior of the initial datum $u_{0}\left( x\right) ,$ for
large $x.$

Pivotal role for determining these limits is played by an underlying
positive finite Borel measure, $\mu ,$ which is the defining element for the
construction of the forward performance process. Indeed, it was shown in 
\cite{MZ10a} that the above function $u$ is uniquely (up to an additive
constant) related to a harmonic function $h:\mathbb{R\times }\left[ 0,\infty
\right) \longrightarrow \mathbb{R}^{+}$, and, furthermore, the latter is
uniquely characterized by an integral transform, specifically,%
\begin{equation}
u_{x}\left( h\left( z,t\right) ,t\right) =-e^{-x+\frac{t}{2}}\text{ \  \  \  \
with \  \  \ }h\left( z,t\right) =\int_{a}^{b}e^{zy-\frac{1}{2}y^{2}t}\mu
\left( dy\right) ,  \label{u-h}
\end{equation}%
for $0\leq a\leq b\leq \infty .$

An immediate consequence of this general solution is that the initial datum
is directly related to this measure $\mu ,$ in that $\left( u_{0}^{\prime
}\right) ^{\left( -1\right) }$ needs to be of the integral form 
\begin{equation*}
\left( u_{0}^{\prime }\right) ^{\left( -1\right) }\left( x\right)
=\int_{a}^{b}x^{-y}\mu \left( dy\right) .
\end{equation*}

As a result, it is natural to expect that the asymptotic properties of $%
u_{0}\left( x\right) ,$ which enter in the turnpike assumptions, are also
directly linked to the form and properties of $\mu $.

Furthermore, this measure also appears in the specification of the risk
tolerance function. Indeed, we deduce from (\ref{r-intro}) and (\ref{u-h})
that $r\left( x,t\right) $ can be represented as%
\begin{equation}
r\left( x,t\right) =h_{x}\left( h^{\left( -1\right) }\left( x,t\right)
,t\right) ,  \label{r-h-intro}
\end{equation}%
with both $h_{x}$ and $h^{\left( -1\right) }$ depending on $\mu .$

The main results herein are that, if the support of the measure is finite, $%
b<\infty ,$ then the \textit{spatial} limit coincides with the \textit{%
right-end} point of the support while the \textit{temporal }limit with the 
\textit{left-end} one, namely,%
\begin{equation}
\lim_{x\uparrow \infty }\frac{r(x,t_{0})}{x}=b\text{ \  \  \  \  \ and \  \  \  \  \ 
}\lim_{t\uparrow \infty }\frac{r(x_{0},t)}{x}=a.  \label{results}
\end{equation}

The first step in obtaining the above limits is to understand the connection
between the asymptotic behavior of the initial (marginal) datum and the
finiteness of the measure's support. We study the following two cases, which
correspond to the spatial and temporal limits, respectively.

We first show that the asymptotic assumption (\ref{assumption}), stated in
terms of the marginal, 
\begin{equation}
u_{0}^{\prime }\left( x\right) \  \text{\ }\sim \text{ }\ x^{\gamma -1},
\label{one}
\end{equation}%
if and only if the right end of the measure's support satisfies both $b=%
\frac{1}{1-\gamma }$ and $\mu \left( \left \{ b\right \} \right) =1.$ In
other words, condition (\ref{one}) implies that the measure must have finite
support with its right boundary equal to $\frac{1}{1-\gamma }$ and,
furthermore, with a mass at this point. Conversely, for the measure to have
these properties, condition (\ref{one}) must hold. We then establish the
first limit in (\ref{results}) using representation (\ref{u-h}), the
equation (\ref{u pde}) satisfied by $u\left( x,t\right) ,$ and various
convexity properties of $h$ and its derivatives. We stress that the
requirement that $\mu \left( \left \{ b\right \} \right) \neq 0$ cannot be
relaxed. Indeed, we show in Example 6.2, where the measure is the Lebesgue
one, that the spatial turnpike property fails.

For the second case, we relate the finiteness of the measure's support with
a relaxed version of (\ref{one}). We show that if there exists $\gamma <1,$\ 
$\gamma \neq 0$, such that for all $\gamma ^{\prime }\in \left( \gamma
,1\right) $ and $\gamma ^{^{\prime \prime }}<\gamma ,$ 
\begin{equation}
\lim_{x\uparrow \infty }\frac{u_{0}^{\prime }\left( x\right) }{x^{\gamma
^{\prime }-1}}=0\text{ \  \  \ and \  \  \ }\lim_{x\uparrow \infty }\frac{%
u_{0}^{\prime }\left( x\right) }{x^{\gamma ^{^{\prime \prime }}-1}}=\infty ,
\label{Variation}
\end{equation}%
then the right boundary of the measure's support must satisfy $b=\frac{1}{%
1-\gamma },$ and vice-versa. This regular variation assumption is weaker
than (\ref{one}), needed for the spatial limit and, naturally, yields a
weaker result. Indeed, while the support has to be finite with right
boundary equal to $\frac{1}{1-\gamma },$ it does not need to have a mass at $%
\frac{1}{1-\gamma }.$

We in turn establish the second limit in (\ref{results}), which is the
genuine analogue of the classical turnpike result. Obtaining this limit is
considerably more challenging than in the classical case due to the
ill-posed nature of the problem. Indeed, the methodology used in \cite{HZ99}
is inapplicable because of lack of comparison results for the ergodic
version of the equation satisfied by $r\left( x,t\right) .$ The approach of 
\cite{B-Zheng} does not apply either because of the lack of connection
between the solutions of the ill-posed heat equation and Feynman-Kac type
stochastic representation of its solution. Therefore, one needs to work
directly with the function $r\left( x,t\right) ,$ which, from (\ref%
{r-h-intro}) and (\ref{u-h}), is given in the implicit form%
\begin{equation*}
r\left( x,t\right) =\int_{a}^{b}ye^{yh^{\left( -1\right) }\left( x,t\right) -%
\frac{1}{2}y^{2}t}\mu \left( dy\right) ,
\end{equation*}%
where however the spatial inverse $h^{\left( -1\right) }$ is involved.

The key step in obtaining the temporal limit is to show that 
\begin{equation*}
\lim_{t\uparrow \infty }\frac{h^{\left( -1\right) }\left( x,t\right) }{t}=%
\frac{a}{2},
\end{equation*}%
where $a$ is the left end point of the measure's support. Then the temporal
convergence in (\ref{results}) and the rate of convergence is shown using
the implicit representation 
\begin{equation*}
r\left( x,t\right) -ax=\int_{a}^{b}\left( y-a\right) e^{ty\left( \frac{%
h^{\left( -1\right) }\left( x,t\right) }{t}-\frac{1}{2}y\right) }\mu \left(
dy\right) .
\end{equation*}

In addition to the general spatial and temporal convergence results, we
present two representative examples. In the first, the measure is a finite
sum of Dirac functions while, in the second, it is taken to be the Lebesgue
measure. We calculate the limits of (\ref{results}), and also provide
asymptotic expansions for the risk tolerance function.

The paper is structured as follows. In section 2, we present the market
model, the investment performance criterion and a motivating example
demonstrating that the temporal and spatial limits do not in general
coincide. In sections 3 and 4, we analyze respectively the spatial and
temporal asymptotic behavior of the relative risk tolerance, while in
section 5 we analyze the asymptotic properties of the relative prudence
function. In section 6 we present the two representative examples, and
conclude in section 7 with future research directions.

\section{The model and the investment performance criterion}

The market environment consists of one riskless and $k$ risky securities.
The prices of the risky securities are modelled as It\^{o}-diffusion
processes, namely, the price $S^{i}$ of the $i^{th}$ risky asset follows 
\begin{equation*}
dS_{t}^{i}=S_{t}^{i}\left( \mu _{t}^{i}dt+\Sigma _{j=1}^{d}\sigma
_{t}^{ji}dW_{t}^{j}\right) ,
\end{equation*}%
with $S_{0}^{i}>0,$ for $i=1,...,k.$ The process $W_{t}=\left(
W_{t}^{1},...,W_{t}^{d}\right) ,$ $t\geq 0,$ is a standard Brownian motion,
defined on a filtered probability space $\left( \Omega ,\mathcal{F},\mathbb{P%
}\right) .$ The coefficients $\mu _{t}^{i}$ and $\sigma _{t}^{i}=\left(
\sigma ^{1i},...,\sigma _{t}^{di}\right) ,$ $i=1,...,k,$ $t\geq 0,$ are $%
\mathcal{F}_{t}-$adapted processes and values in $\mathbb{R}$ and $\mathbb{R}%
^{d},$ respectively. We denote by $\sigma _{t}$ the volatility matrix, i.e.
the $d\times k$ random matrix $\left( \sigma _{t}^{ji}\right) ,$ whose $%
i^{th}$ column represents the volatility $\sigma _{t}^{i}$ of the $i^{th}$
asset. We may, then, alternatively, write the above equation as%
\begin{equation*}
dS_{t}^{i}=S_{t}^{i}\left( \mu _{t}^{i}dt+\sigma _{t}^{i}\cdot dW_{t}\right)
.
\end{equation*}%
The riskless asset, the savings account, has price process $B$ satisfying $%
dB_{t}=r_{t}B_{t}dt$ with $B_{0}=1,$ and for a nonnegative $\mathcal{F}_{t}-$%
adapted interest rate process $r_{t}.$ Also, we denote by $\mu _{t}$ the $k$%
-dimensional vector with coordinates $\mu _{t}^{i}$ and by $\mathbf{1}$%
\textbf{\ }the $k$-dim vector with every component equal to one. The
processes $\mu _{t},\sigma _{t}$ and $r_{t}$ satisfy the appropriate
integrability conditions.

We assume that $\mu _{t}-r_{t}\mathbf{1\in }\mathit{Lin}\left( \sigma
_{t}^{T}\right) ,$ where $\mathit{Lin}\left( \sigma _{t}^{T}\right) $
denotes the linear space generated by the columns of $\sigma _{t}^{T}.$
Therefore, the equation $\sigma _{t}^{T}z=\mu _{t}-r_{t}\mathbf{1}$ has a
solution, known as the market price of risk, 
\begin{equation}
\lambda _{t}=\left( \sigma _{t}^{T}\right) ^{+}\left( \mu _{t}-r_{t}\mathbf{1%
}\right) .  \label{market-price-risk}
\end{equation}%
It is assumed that there exists a deterministic constant $c>0,$ such that $%
\left \vert \lambda _{t}\right \vert \leq c$ and that $\lim_{t\uparrow
\infty }\int_{0}^{t}\left \vert \lambda _{s}\right \vert ^{2}ds=\infty .$

Starting at $t=0$ with an initial endowment $x\geq 0,$ the investor invests
at any time $t>0$ in the risky and riskless assets. The present value of the
amounts invested are denoted by the processes $\pi _{t}^{0}$ and $\pi
_{t}^{i},$ $i=1,...,k,$ respectively, which are taken to be self-financing.
The present value of her investment is then given by the discounted wealth
process $X_{t}^{\pi }=\sum \pi _{t}^{i},$ $t>0,$ which solves%
\begin{equation}
dX_{t}^{\pi }=\sigma _{t}\pi _{t}\cdot \left( \lambda _{t}dt+dW_{t}\right)
\label{wealth}
\end{equation}%
with the (column) vector $\pi _{t}=\left( \pi _{t}^{i};i=1,...,k\right) .$
It is taken to satisfy the non-negativity constraint $X_{t}^{\pi }\geq 0,$ $%
t>0.$

The set of admissible policies is given by 
\begin{equation*}
\mathcal{A}=\left \{ \pi :\text{self-financing, \ }\pi _{t}\in \mathcal{F}%
_{t},\text{ }E_{\mathbb{P}}\int_{0}^{t}\left \vert \sigma _{s}\pi _{s}\right
\vert ^{2}ds<\infty ,\text{ }X_{t}^{\pi }\geq 0,\text{ }t>0\right \} .
\end{equation*}%
The performance of admissible investment strategies is evaluated via the
so-called forward investment performance criteria, introduced in \cite%
{Musiela-first} (see, also \cite{Musiela-Carmona}, \cite{MZ-QF} and \cite%
{MZ3})\textbf{. }We review their definition next.

We introduce the domain notation $\mathbb{D}_{+}=\mathbb{R}_{+}\times \left[
0,\infty \right) $ and $\mathbb{D}=\mathbb{R}\times \left[ 0,\infty \right)
. $

\begin{definition}
An $\mathcal{F}_{t}$-adapted process $U(x,t)$ is a forward investment
performance if for $\left( x,t\right) \in \mathbb{D},$

i) the mapping $x\rightarrow U(x,t)$ is strictly increasing and strictly
concave;

ii) for each $\pi \in \mathcal{A}$, $E_{\mathbb{P}}(U(X_{t}^{\pi
},t))^{+}<\infty $, and for $s\geq t$, 
\begin{equation*}
U\left( X_{t}^{\pi },t\right) \geq E_{\mathbb{P}}\left( \left. U(X_{s}^{\pi
},s)\right \vert \mathcal{F}_{t}\right) ,
\end{equation*}

iii)\ there exists $\pi ^{\ast }\in \mathcal{A}$ such that for $s\geq t$, 
\begin{equation*}
U\left( X_{t}^{\pi ^{\ast }},t\right) =E_{\mathbb{P}}\left( \left.
U(X_{s}^{\pi ^{\ast }},s)\right \vert \mathcal{F}_{t}\right) .
\end{equation*}
\end{definition}

Herein we focus on the class of\textit{\ time-monotone} forward performance
processes. For the reader's convenience, we rewrite some of the results we
stated in the introduction. Time-monotone forward processes were extensively
studied in \cite{MZ10a}, and are given by 
\begin{equation}
U(x,t)=u(x,A_{t}),  \label{U def}
\end{equation}%
where $u:\mathbb{D}_{+}\rightarrow \mathbb{R}_{+}$ is strictly increasing
and strictly concave in $x,$ satisfying 
\begin{equation}
u_{t}=\frac{1}{2}\frac{u_{x}^{2}}{u_{xx}}.  \label{u pde}
\end{equation}%
The market input processes $A_{t}$ and $M_{t}$, $t\geq 0$, are defined as 
\begin{equation}
M_{t}=\int_{0}^{t}\lambda _{s}\cdot dW_{s}\text{ \  \  \ and \  \  \ }%
A_{t}=\int_{0}^{t}\left \vert \lambda _{s}\right \vert ^{2}ds=\left \langle
M\right \rangle _{t}.~  \label{market input}
\end{equation}%
The optimal portfolio process $\pi _{t}^{\ast }$ is given by $\pi _{t}^{\ast
}=\sigma _{t}^{+}\lambda _{t}r(X_{t}^{\ast },A_{t}),$ where the (local) risk
tolerance function $r\left( x,t\right) :\mathbb{D}_{+}\rightarrow \mathbb{R}%
_{+}$ is defined as%
\begin{equation}
r\left( x,t\right) :=-\frac{u_{x}\left( x,t\right) }{u_{xx}\left( x,t\right) 
}.  \label{risk-tolerance}
\end{equation}

Central role in the construction of the performance criterion, the optimal
policies and their wealth plays a harmonic function $h:\mathbb{D}\rightarrow 
\mathbb{R}_{+}$, defined via the transformation 
\begin{equation}
u_{x}(h(z,t),t)=e^{-z+\frac{t}{2}}.  \label{exp tran}
\end{equation}%
It solves, as it follows from (\ref{u pde}) and (\ref{exp tran}), the
ill-posed heat equation 
\begin{equation}
h_{t}+\frac{1}{2}h_{zz}=0.  \label{h pde}
\end{equation}%
Moreover, it is positive and strictly increasing in $z.$ It was shown in 
\cite{MZ10a}, that such solutions are \textit{uniquely} represented by 
\begin{equation*}
h(z,t)=\int_{a}^{b}\frac{e^{yz-\frac{1}{2}y^{2}t}-1}{y}\nu (dy)+C,
\end{equation*}%
where $a=0^{+}$ or $a>0,b\leq \infty $ and $C$ a generic constant.

The measure $\nu $ is defined on $\mathcal{B}^{+}(\mathbb{R}),$ the set of
positive Borel measures, with the additional properties that, for $z\in 
\mathbb{R},$ $\int_{a}^{b}e^{yz}\nu (dy)<\infty $ \ and $\int_{a}^{b}\frac{%
\nu (dy)}{y}<\infty .$ To simplify the presentation and without loss of
generality, we choose $C:=\int_{a}^{b}\frac{1}{y}\nu (dy)$ and, also,
introduce the normalized measure $\mu \left( dy\right) =\frac{1}{y}\nu (dy).$

Then, the function $h$ has, for $\left( z,t\right) \in \mathbb{D},$ the
representation 
\begin{equation}
h(z,t)=\int_{a}^{b}e^{yz-\frac{1}{2}y^{2}t}\mu (dy)\text{,}
\label{h-representation}
\end{equation}%
with\  \ $\int_{a}^{b}ye^{yz}\mu (dy)<\infty ,$ \  \ $a=0^{+},$ $a>0,$ $b\leq
\infty .$

We easily deduce that for each $t_{0}\geq 0,$ the function $h\left(
.,t_{0}\right) $ is absolutely monotonic, since $\partial ^{i}h\left(
z,t_{0}\right) /\partial z^{i}>0,$ $i=1,2...$ Such functions satisfy, for
each $t_{0}\geq 0,$ $i=1,2,...,$ the inequality%
\begin{equation}
\frac{\partial ^{i+1}h\left( z,t_{0}\right) }{\partial z^{i+1}}\frac{%
\partial ^{i-1}h\left( z,t_{0}\right) }{\partial z^{i-1}}-\left( \frac{%
\partial ^{i}h\left( z,t_{0}\right) }{\partial z^{i}}\right) ^{2}>0.
\label{AC}
\end{equation}

From (\ref{exp tran}), (\ref{risk-tolerance}) and (\ref{h-representation}),
we obtain that the risk tolerance function is represented as 
\begin{equation}
r(x,t)=h_{z}\left( h^{(-1)}(x,t),t\right) =\int_{a}^{b}ye^{yh^{(-1)}(x,t)-%
\frac{1}{2}y^{2}t}\mu (dy).  \label{r-h}
\end{equation}%
Furthermore, the first equality together with (\ref{h pde}) yields that it
satisfies the (ill-posed) non-linear equation 
\begin{equation}
r_{t}+\frac{1}{2}r^{2}r_{xx}=0,  \label{r-eqn}
\end{equation}%
with $r(x,0)=\int_{a}^{b}ye^{yh^{(-1)}(x,0)}\mu (dy).$

We also have that 
\begin{equation}
r_{x}(x,t)=\frac{h_{zz}\left( h^{(-1)}(x,t),t\right) }{r\left( x,t\right) }=%
\frac{1}{r\left( x,t\right) }\int_{a}^{b}y^{2}e^{yh^{(-1)}(x,t)-\frac{1}{2}%
y^{2}t}\mu (dy)>0.  \label{r-increasing}
\end{equation}%
Furthermore, 
\begin{equation}
r_{xx}\left( x,t\right) =\frac{1}{r^{3}\left( x,t\right) }\left. \left(
h_{zzz}\left( z,t\right) h_{z}\left( z,t\right) -h_{zz}\left( z,t\right)
^{2}\right) \right \vert _{z=h^{(-1)}(x,t)}>0,  \label{r-convex}
\end{equation}%
where we used (\ref{AC}).

We note that we will frequently differentiate under the integral sign in (%
\ref{h-representation}), which is permitted as explained in \cite{MZ10a}%
\textbf{. }It can be also seen directly since, after differentiation, one
can show that the relevant integrands are jointly continuous in their
respective arguments and thus uniformly locally integrable. This allows us
to differentiate under the integral sign (see, for example, Theorem 24.5 in 
\cite{Aliprantis} and the remarks following it).

As stated in the introduction, the aim herein is to investigate the spatial
and temporal limits in (\ref{limits}), with $r\left( x,t\right) $ as in (\ref%
{r-h}) when the measure has \textit{finite }support. We first provide an
example which shows that, contrary to the classical case, these two limits
do \textit{not in }general coincide.

\subsection{A motivating example}

Let the underlying measure $\mu $ be a Dirac function at $\frac{1}{1-\gamma }
$, $\gamma <1$. From (\ref{h-representation}) and (\ref{exp tran}) we have
that, for $t\geq 0,$ 
\begin{equation*}
h(x,t)=e^{\frac{1}{1-\gamma }x-\frac{1}{2}\left( \frac{1}{1-\gamma }\right)
^{2}t}\text{ \  \  \ and \  \ }u_{x}(x,t)=x^{\gamma -1}e^{-\frac{\gamma }{%
2\left( 1-\gamma \right) }t}.
\end{equation*}

Therefore, the local risk tolerance function is given by $r(x,t)=\frac{1}{%
1-\gamma }x$ and thus the spatial and temporal limits coincide,%
\begin{equation*}
\lim_{x\uparrow \infty }\frac{r(x,t_{0})}{x}=\frac{1}{1-\gamma }\text{\  \  \
\ and \  \  \ }\lim_{t\uparrow \infty }\frac{r(x_{0},t)}{x_{0}}=\frac{1}{%
1-\gamma }\text{,}
\end{equation*}%
for fixed $t_{0},x_{0}$ respectively.

Next, let the measure $\mu $ be the sum of two Dirac functions at points $a=%
\frac{1}{1-\theta }$ and $b=\frac{1}{1-\gamma }$ such that $b=2a,$ with $%
0<\theta <1$ and $\gamma <1$, i.e., 
\begin{equation}
\mu =\delta _{\frac{1}{1-\theta }}+\delta _{\frac{1}{1-\gamma }}\text{ \  \  \
\  \ with \  \  \  \ }\frac{1}{1-\gamma }=2\frac{1}{1-\theta }.
\label{measure-sum-Diracs}
\end{equation}%
Then, (\ref{h-representation}) and (\ref{exp tran}) yield that $h(x,0)=e^{%
\frac{1}{1-\theta }x}+e^{\frac{1}{1-\gamma }x},$%
\begin{equation}
u_{x}(x,0)=2^{1-\theta }\left( \sqrt{1+4x}-1\right) ^{\theta -1}\text{ \  \
and \  \  \ }u_{x}^{\left( -1\right) }(x,0)=x^{-\frac{1}{1-\theta }}+x^{-\frac{%
1}{1-\gamma }}.  \label{u-x-sumDirac}
\end{equation}%
In turn, 
\begin{equation}
\lim_{x\uparrow \infty }\frac{u_{x}(x,0)}{x^{\gamma -1}}=\lim_{x\uparrow
\infty }\frac{2^{2\left( 1-\gamma \right) }\left( \sqrt{1+4x}-1\right)
^{2\left( \gamma -1\right) }}{x^{\gamma -1}}=1.  \label{u-x-Dirac-lomit}
\end{equation}%
Moreover, expression (\ref{h-representation}) gives, for $t>0,$ 
\begin{equation*}
h(x,t)=e^{\frac{1}{1-\theta }x-\frac{1}{2}\frac{1}{\left( 1-\theta \right)
^{2}}t}+e^{\frac{2}{1-\theta }x-\frac{1}{2}\frac{2}{\left( 1-\theta \right)
^{2}}t},
\end{equation*}%
and, thus,%
\begin{equation}
h^{(-1)}(x,t)=\frac{1}{1-\theta }t+\left( 1-\theta \right) \ln \left( \frac{%
\sqrt{e^{\left( \frac{1}{1-\theta }\right) ^{2}t}+4x}-\sqrt{e^{\left( \frac{1%
}{1-\theta }\right) ^{2}t}}}{2}\right) .  \label{h-inverse}
\end{equation}%
In turn, transformation (\ref{exp tran}) yields 
\begin{equation*}
u_{x}(x,t)=2^{1-\theta }e^{(\frac{1}{2}-\frac{1}{1-\theta })t}\left( \sqrt{%
e^{\left( \frac{1}{1-\theta }\right) ^{2}t}+4x}-\sqrt{e^{\left( \frac{1}{%
1-\theta }\right) ^{2}t}}\right) ^{\gamma -1}.
\end{equation*}%
Differentiating the above to obtain $u_{xx}(x,t)$ (or using (\ref%
{h-representation}), (\ref{h-inverse}) and (\ref{r-h})), we deduce that the
risk tolerance function is given by 
\begin{equation}
r(x,t)=\frac{x}{1-\gamma }\frac{\sqrt{4x+e^{\left( \frac{1}{1-\theta }%
\right) ^{2}t}}}{\sqrt{e^{(\frac{1}{1-\theta })^{2}t}+4x}+\sqrt{e^{(\frac{1}{%
1-\theta })^{2}t}}}.  \label{r-Dirac-forward}
\end{equation}%
Therefore, for each $t_{0}\geq 0,$ 
\begin{equation}
\lim_{x\uparrow \infty }\frac{r(x,t_{0})}{x}=\frac{2}{1-\theta }=\frac{1}{%
1-\gamma }.  \label{temporal-example}
\end{equation}%
while, for each $x_{0}>0,$ 
\begin{equation}
\lim_{t\uparrow \infty }\frac{r(x_{0},t)}{x_{0}}=\frac{1}{1-\theta }.
\label{spatial-example}
\end{equation}%
Therefore, the spatial and temporal limits do \textit{not }coincide.

Next, we make the following two important observations. Firstly, note that (%
\ref{measure-sum-Diracs}) yields that the support of the measure is 
\begin{equation*}
\text{\textit{supp}}\left( \mu \right) =\left \{ \frac{1}{1-\theta },\frac{1%
}{1-\gamma }\right \} .
\end{equation*}%
Therefore, the \textit{spatial} limit coincides with the \textit{right-end}
of the support while the \textit{temporal} limit with the\textit{\ left-end }%
one.

Secondly, for each $x_{0}>0$ the temporal limit of the ratio $\frac{%
h^{(-1)}(x_{0},t)}{t}$ is equal to \textit{half }of the \textit{left-end }%
point, since (\ref{h-inverse}) yields 
\begin{equation*}
\lim_{t\uparrow \infty }\frac{h^{(-1)}(x_{0},t)}{t}
\end{equation*}%
\begin{equation*}
=\lim_{t\uparrow \infty }\left( \frac{1}{1-\theta }+\frac{1-\theta }{t}\ln
\left( \frac{1}{2}\left( \sqrt{e^{(\frac{1}{1-\theta })^{2}t}+4x}-\sqrt{e^{(%
\frac{1}{1-\theta })^{2}t}}\right) \right) \right) =\frac{1}{2\left(
1-\theta \right) }.
\end{equation*}%
In section 4 we will show that these two properties are always valid. In
particular, we will see that it is the limit of the above ratio that plays
the key role in establishing the temporal turnpike limit for general
measures.

\bigskip

To juxtapose the above results with the ones in the traditional expected
terminal utility setting, we compute the analogous quantities and associated
limits for the cases analyzed in \cite{HZ99} and \cite{B-Zheng} for
log-normal markets. Without loss of generality, we consider a market with a
riskless bond of zero interest rate and a single log-normal stock with mean
rate of return $\mu $ and volatility $\sigma .$

To this end, we fix an arbitrary horizon $T>0$ and, in analogy to (\ref%
{u-x-sumDirac}), we take the \textit{terminal} inverse marginal utility, $%
I\left( x\right) =\left( U^{\prime }\right) ^{\left( -1\right) }\left(
x\right) ,$ to be%
\begin{equation*}
I\left( x\right) =x^{-\frac{1}{1-\theta }}+x^{-\frac{1}{1-\gamma }},
\end{equation*}%
for $x>0$ and $\theta ,\gamma $ as in (\ref{measure-sum-Diracs}). This
corresponds to terminal marginal utility $U^{\prime }\left( x\right) =\left( 
\frac{\sqrt{1+4x}-1}{2}\right) ^{\theta -1}$ and, thus, in analogy to (\ref%
{u-x-Dirac-lomit}),%
\begin{equation*}
\lim_{x\uparrow \infty }\frac{U^{\prime }\left( x\right) }{x^{\gamma -1}}=1.
\end{equation*}

We now consider the value function, say $u\left( x,t;T\right) $ of the
associated Merton problem, for $t\in \left[ 0,T\right] .$ Letting $\tau =T-t$
be the time to the end of the investment horizon, we deduce, using well
known results, that the function $\tilde{u}\left( x,\tau \right) \equiv
u\left( x,T-t;T\right) ,$ satisfies, for $(x,\tau )\in \mathbb{R}_{+}\times %
\left[ 0,T\right) $, the Hamilton-Jacobi-Bellman equation 
\begin{equation*}
\tilde{u}{_{\tau }+}\frac{1}{2}\lambda ^{2}{\frac{\tilde{u}_{x}^{2}}{\tilde{u%
}_{xx}}=0}.
\end{equation*}%
The inverse spatial marginal value function $\tilde{v}:\mathbb{R}_{+}\times %
\left[ 0,T\right) \rightarrow \mathbb{R}_{+}$ then solves%
\begin{equation*}
\tilde{v}{_{\tau }=}\frac{1}{2}\lambda ^{2}x{^{2}\tilde{v}_{xx}}+\lambda
^{2}x\tilde{v}_{x},
\end{equation*}%
with $\tilde{v}(x,0)=I\left( x\right) .$ We easily deduce that 
\begin{equation*}
\tilde{v}(x,\tau )=e^{\alpha \tau }x^{-\alpha }+e^{\beta \tau }x^{-2\alpha },
\end{equation*}%
with ${\alpha =}\frac{1}{2}\lambda ^{2}\frac{\theta }{\left( 1-\theta
\right) ^{2}}$ and ${\beta =\lambda }^{2}\frac{1+\theta }{\left( 1-\theta
\right) ^{2}}{.}$ Note that $\beta >2\alpha $.

Taking the spatial inverse of $\tilde{v}(x,\tau )$ yields 
\begin{equation*}
\tilde{u}_{x}\left( x,\tau \right) =\left( \frac{e^{\alpha \tau }+\sqrt{%
e^{2\alpha \tau }+4xe^{\beta \tau }}}{2x}\right) ^{1-\theta }.
\end{equation*}%
Therefore, the associated risk tolerance function is given by%
\begin{equation*}
\tilde{r}(x,\tau )=\frac{1}{1-\theta }\left( \frac{2x}{1+\sqrt{1+4xe^{(\beta
-2\alpha )\tau }}}+\frac{8x^{2}}{\left( {\sqrt{e^{(2\alpha -\beta )\tau }}+%
\sqrt{e^{(2\alpha -\beta )\tau }+4x}}\right) ^{2}}\right) .
\end{equation*}%
In turn, for each $\tau _{0}>0$ and $x_{0}>0,$ we obtain, respectively, the
spatial and the temporal limits, 
\begin{equation*}
\lim_{x\uparrow \infty }\frac{\tilde{r}(x,\tau _{0})}{x}=\frac{1}{1-\theta }%
\text{ \  \  \  \  \ and \  \  \  \ }\lim_{\tau \uparrow \infty }\frac{\tilde{r}%
(x_{0},\tau )}{x_{0}}=\frac{1}{1-\theta }.
\end{equation*}

\section{Spatial asymptotic results}

We examine the spatial asymptotic behavior of the risk tolerance function as 
$x\uparrow \infty ,$ for each $t_{0}\geq 0$, under asymptotic assumptions
for large wealth levels of the investor's initial risk preferences. In
accordance with similar assumptions in \cite{HZ99} and \cite{B-Zheng}, we
impose this asymptotic assumption on the marginal $u_{0}^{\prime }\left(
x\right) $ instead of the function itself.

\textbf{Assumption 1:} \textit{The initial datum }$u_{0}$ satisfies, for
some $\gamma <1,$ 
\begin{equation}
\lim_{x\uparrow \infty }\frac{u_{0}^{\prime }(x)}{x^{\gamma -1}}=1.
\label{asymptotic1}
\end{equation}

The next result yields necessary and sufficient conditions on $b,$ the right
end of the support of the measure, for the above assumption to hold.

\begin{lemma}
Assumption (\ref{asymptotic1}) holds if and only if the associated measure $%
\mu $ satisfies 
\begin{equation}
b=\frac{1}{1-\gamma }\text{\  \  \  \ and \  \ }\mu \left( \left \{ \frac{1}{%
1-\gamma }\right \} \right) =1.  \label{consequence1}
\end{equation}
\end{lemma}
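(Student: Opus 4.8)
The plan is to reduce everything to the \emph{inverse marginal}, for which we have the explicit representation $(u_0')^{(-1)}(x)=\int_a^b x^{-y}\mu(dy)$, equivalently, after the substitution $x=e^{-s}$,
$(u_0')^{(-1)}(e^{-s})=h(s,0)=\int_a^b e^{sy}\mu(dy)=:L(s)$,
obtained from (\ref{exp tran}) and (\ref{h-representation}) at $t=0$; note $L(s)<\infty$ for every $s$ by the standing moment condition $\int_a^b ye^{yz}\mu(dy)<\infty$ recorded after (\ref{h-representation}). The first point is that, since $u_0'$ is continuous and strictly decreasing with $u_0'(x)\to 0$ as $x\uparrow\infty$, condition (\ref{asymptotic1}) is equivalent to the asymptotic relation for the inverse, $(u_0')^{(-1)}(w)\sim w^{-1/(1-\gamma)}$ as $w\downarrow 0$. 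This equivalence is standard but needs the following one-line check: writing $u_0'(x)=x^{\gamma-1}g(x)$ with $g(x)\to1$ and taking logarithms gives $\ln x=\big(\ln w-\ln g(x)\big)/(\gamma-1)$, and since $\ln g(x)\to0$ the correction is genuinely $o(1)$, so $x\sim w^{1/(\gamma-1)}$ (and conversely). Consequently (\ref{asymptotic1}) is equivalent to $L(s)\sim e^{s/(1-\gamma)}$ as $s\uparrow\infty$, while (\ref{consequence1}) asserts $b=\tfrac1{1-\gamma}$ together with $\mu(\{b\})=1$ in $L(s)=\mu(\{b\})e^{sb}+\int_{[a,b)}e^{sy}\mu(dy)$.

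For the direction (\ref{consequence1}) $\Rightarrow$ (\ref{asymptotic1}), assume $b=\tfrac1{1-\gamma}$ and $\mu(\{b\})=1$. Then $e^{-sb}L(s)=1+\int_{[a,b)}e^{-s(b-y)}\mu(dy)$, and the integral tends to $0$ by dominated convergence, since $e^{-s(b-y)}\downarrow0$ pointwise on $[a,b)$ and is bounded by $1\in L^1(\mu)$. Hence $L(s)\sim e^{sb}$, i.e. $(u_0')^{(-1)}(w)\sim w^{-b}$, and undoing the inverse-function step gives $u_0'(x)\sim x^{-1/b}=x^{\gamma-1}$.

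For the converse, assume $L(s)\sim e^{s/(1-\gamma)}$. I would first identify $b$: taking logarithms gives $\tfrac1s\ln L(s)\to\tfrac1{1-\gamma}$, whereas $\tfrac1s\ln L(s)\to b=\sup(\mathrm{supp}\,\mu)$ always — if $b<\infty$ because $e^{s(b-\varepsilon)}\mu\big((b-\varepsilon,b]\big)\le L(s)\le e^{sb}\mu([a,b])$ (let $s\uparrow\infty$, then $\varepsilon\downarrow0$, using $\mu((b-\varepsilon,b])>0$), and if $b=\infty$ because $L(s)\ge e^{sK}\mu([K,\infty))$ forces $\tfrac1s\ln L(s)\to\infty$ for every admissible $K$. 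Therefore $b<\infty$ and $b=\tfrac1{1-\gamma}$. With $b$ known, $e^{-sb}L(s)\to1$; splitting $e^{-sb}L(s)=\mu(\{b\})+\int_{[a,b)}e^{-s(b-y)}\mu(dy)$ and sending the integral to $0$ by exactly the same dominated-convergence argument as above forces $\mu(\{b\})=1$.

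The two dominated-convergence steps and the elementary Laplace-type estimate $\tfrac1s\ln L(s)\to b$ are routine; the part I expect to require the most care is the passage between the asymptotics of $u_0'$ at $\infty$ and those of $(u_0')^{(-1)}$ at $0^+$ (used in \emph{both} directions), together with ruling out $b=\infty$ in the forward implication — the rest is bookkeeping. It is also worth stating explicitly at the outset that $u_0'$ admits the integral representation above, which is guaranteed by the characterization of time-monotone forward criteria in \cite{MZ10a} quoted around (\ref{h-representation}).
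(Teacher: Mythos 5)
Your proof is correct and follows essentially the same route as the paper: both reduce Assumption (\ref{asymptotic1}) to the asymptotics $h(z,0)\sim e^{z/(1-\gamma)}$ of the exponential transform of $\mu$ (you via inverse-marginal asymptotics, the paper via composing $u_x$ with $h$), kill any mass above $\tfrac{1}{1-\gamma}$ by an elementary exponential lower bound, and then isolate the atom at $\tfrac{1}{1-\gamma}$ with dominated convergence. Your write-up is in fact slightly more complete, since you treat the case $b=\infty$ explicitly and carry out the converse implication, which the paper dismisses with ``the rest of the proof follows easily.''
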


\begin{proof}
From (\ref{asymptotic1}), (\ref{exp tran})\ and the fact that $h(x,0)$ is
strictly increasing and of full range, we have 
\begin{equation}
1=\lim_{x\uparrow \infty }\frac{u_{x}(x,0)}{x^{\gamma -1}}=\lim_{z\uparrow
\infty }\frac{u_{x}(h(z,0),0)}{(h(z,0))^{\gamma -1}}=\lim_{z\uparrow \infty
}\left( \frac{h(z,0)}{e^{\frac{1}{1-\gamma }z}}\right) ^{1-\gamma }.
\label{u-x-h}
\end{equation}%
Therefore, representation (\ref{h-representation}) gives 
\begin{equation}
\lim_{z\uparrow \infty }\int_{a}^{b}e^{z(y-\frac{1}{1-\gamma })}\mu (dy)=1.
\label{limit of h(x,0)}
\end{equation}%
If $a=b,$ then (\ref{consequence1})\ follows directly. If $a<b,$ then, it
must be that $a\leq \frac{1}{1-\gamma },$ otherwise, we get a contradiction.
In turn, for $\varepsilon >0,$ 
\begin{equation}
\int_{a}^{b}e^{z(y-\frac{1}{1-\gamma })}\mu (dy)\geq \int_{\frac{1}{1-\gamma 
}+\varepsilon }^{b}e^{z(y-\frac{1}{1-\gamma })}\mu (dy)\geq e^{\varepsilon
z}\mu \left( \lbrack \frac{1}{1-\gamma }+\varepsilon ,b]\right) .
\label{ineq of h(x,0)}
\end{equation}%
Sending $\varepsilon \downarrow 0$ and using (\ref{limit of h(x,0)}) yield
that $\mu ((\frac{1}{1-\gamma },b])=0,$ and thus, supp$(\mu )\subseteq (a,%
\frac{1}{1-\gamma }].$ Moreover, we have from (\ref{limit of h(x,0)}) that 
\begin{equation*}
1=\lim_{z\uparrow \infty }\int_{a}^{\left( \frac{1}{1-\gamma }\right)
^{-}}e^{z(y-\frac{1}{1-\gamma })}\mu (dy)+\mu (\{ \frac{1}{1-\gamma }\})=\mu
(\{ \frac{1}{1-\gamma }\}),
\end{equation*}%
and we conclude. The rest of the proof follows easily.
\end{proof}

We next state the main spatial asymptotic result.

\begin{proposition}
Suppose that the initial datum $u_{0}$ satisfies the asymptotic property (%
\ref{asymptotic1}). Then, for each $t_{0}\geq 0$, the relative risk
tolerance converges to the right-end of the support of the measure $\mu ,$ 
\begin{equation}
\lim_{x\uparrow \infty }\frac{r(x,t_{0})}{x}=\frac{1}{1-\gamma }.
\label{x-lim of r(x,t)}
\end{equation}
\end{proposition}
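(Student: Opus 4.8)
The plan is to pass from the asymptotic statement about $u_0'$ to the behaviour of $h(\cdot,0)$ and its inverse, then run the argument through the explicit representations \eqref{r-h} and \eqref{h-representation} for $r$, using the extra structure granted by Lemma (the measure has finite support $b=\tfrac1{1-\gamma}$ with an atom of mass one at $b$). First I would record, exactly as in \eqref{u-x-h}, that Assumption~1 forces
\begin{equation*}
\lim_{z\uparrow\infty}\frac{h(z,0)}{e^{z/(1-\gamma)}}=1,
\end{equation*}
and hence, inverting and using that $h(\cdot,0)$ is strictly increasing with full range, $h^{(-1)}(x,0)\sim(1-\gamma)\ln x$ as $x\uparrow\infty$. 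Because $h$ solves the backward heat equation \eqref{h pde} and, for each fixed $t_0$, $z\mapsto h(z,t_0)$ is again absolutely monotonic of the same exponential type (the top of the support is unchanged under the Gaussian factor $e^{-\frac12 y^2 t}$), the same reasoning gives $h^{(-1)}(x,t_0)\sim(1-\gamma)\ln x$ for every $t_0\ge 0$; this is the key asymptotic input and I would isolate it as the first step.

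Next I would use the representation \eqref{r-h},
\begin{equation*}
\frac{r(x,t_0)}{x}=\frac{1}{x}\int_a^{b}y\,e^{y\,h^{(-1)}(x,t_0)-\frac12 y^2 t_0}\,\mu(dy)
=\frac{h_z\big(h^{(-1)}(x,t_0),t_0\big)}{h\big(h^{(-1)}(x,t_0),t_0\big)},
\end{equation*}
the last equality because $x=h(h^{(-1)}(x,t_0),t_0)$. So the claim \eqref{x-lim of r(x,t)} is exactly the statement that the logarithmic derivative $h_z/h$ of $h(\cdot,t_0)$ tends to $b=\tfrac1{1-\gamma}$ as its argument $z=h^{(-1)}(x,t_0)\uparrow\infty$. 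Writing $h_z/h = \int_a^b y\,d\mathbb{P}_z$ where $d\mathbb{P}_z(y)=e^{yz-\frac12 y^2 t_0}\mu(dy)\big/\!\int_a^b e^{y'z-\frac12 y'^2 t_0}\mu(dy')$ is a probability measure on $[a,b]$, one sees that as $z\uparrow\infty$ the family $\mathbb{P}_z$ concentrates at the right endpoint $b$ of $\mathrm{supp}(\mu)$: the atom $\mu(\{b\})=1>0$ dominates, since for any $\varepsilon>0$ the mass $\mathbb{P}_z([a,b-\varepsilon])$ is bounded by $e^{-\varepsilon z}\,\mu([a,b-\varepsilon])\big/\mu(\{b\})e^{-\frac12 b^2 t_0}\to 0$. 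Hence $\int_a^b y\,d\mathbb{P}_z \to b$, which is the desired limit. Alternatively, and perhaps more cleanly for the write-up, I would avoid re-deriving $h^{(-1)}$ asymptotics from scratch and instead combine the convexity facts \eqref{r-increasing}–\eqref{r-convex} (so $r(x,t_0)/x$ is monotone and bounded, giving existence of the limit) with equation \eqref{r-eqn} and a sandwich: on one hand $r(x,t_0)\le b\,x + o(x)$ from dropping the negative Gaussian exponents against the atom at $b$; on the other hand $r(x,t_0)\ge \int_{b-\varepsilon}^b y e^{yh^{(-1)}-\frac12 y^2t_0}\mu(dy)\ge (b-\varepsilon)\,x\cdot(1-o(1))$ using $\mu(\{b\})=1$ and that the contribution of $[a,b-\varepsilon)$ to $h$ is exponentially negligible relative to $x$.

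The main obstacle is controlling the interplay between $h^{(-1)}(x,t_0)$ and the Gaussian weights $e^{-\frac12 y^2 t_0}$ uniformly in $y$ over the support: one must be sure that for fixed $t_0$ these weights only perturb $h(\cdot,t_0)$ by a bounded multiplicative factor near $y=b$ and therefore do not shift the effective ``mass at the top'', so that $h^{(-1)}(x,t_0)$ has the same leading logarithmic growth as $h^{(-1)}(x,0)$. This is where the hypothesis $\mu(\{b\})=1$ (as opposed to merely $b=\tfrac1{1-\gamma}$) does the real work: with an atom at $b$ the denominator $\int_a^b e^{yz-\frac12 y^2 t_0}\mu(dy)$ is bounded below by $e^{bz-\frac12 b^2 t_0}$, which already has the right exponential order, so the error terms are genuinely exponentially small in $z$; without the atom (e.g.\ Lebesgue measure, Example~6.2) the denominator can be smaller by a polynomial-in-$z$ factor and the limit $b$ fails. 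I would therefore structure the proof so that this lower bound on $h$ is the first estimate established after the Lemma, and everything else follows by the concentration/sandwich argument above, together with the standard justification (already cited via \cite{Aliprantis}) for differentiating under the integral sign.
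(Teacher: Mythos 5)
Your proposal is correct, but it takes a genuinely different route from the paper. The paper never works with the ratio $h_{z}/h$ directly: it first shows, exactly as you do, that the atom at $b=\tfrac{1}{1-\gamma}$ forces $h(z,t_{0})\sim e^{\frac{1}{1-\gamma}z-\frac{1}{2}(\frac{1}{1-\gamma})^{2}t_{0}}$ via dominated convergence, but then transfers this to the marginal utility, proving $\lim_{x\uparrow\infty}u_{x}(x,t_{0})/\bigl(x^{\gamma-1}e^{-\frac{\gamma}{2(1-\gamma)}t_{0}}\bigr)=1$ and the corresponding limit for $u_{xx}$; the latter requires establishing $u_{xxx}>0$ (convexity of $u_{x}$, obtained by differentiating the transformation (\ref{exp tran}) and using (\ref{AC})) and then invoking the convexity argument of Lemma 3.1(ii) in \cite{HZ99}, after which $r/x=-u_{x}/(xu_{xx})$ gives the claim. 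You instead bypass $u$ altogether: the identity $r(x,t_{0})/x=h_{z}\bigl(h^{(-1)}(x,t_{0}),t_{0}\bigr)/h\bigl(h^{(-1)}(x,t_{0}),t_{0}\bigr)$, read as the mean of the exponentially tilted probability measure $\mathbb{P}_{z}$, together with the lower bound $h(z,t_{0})\geq e^{\frac{1}{1-\gamma}z-\frac{1}{2}(\frac{1}{1-\gamma})^{2}t_{0}}$ coming from $\mu(\{b\})=1$, yields concentration of $\mathbb{P}_{z}$ at $b$ and hence the limit; the only other ingredient is $h^{(-1)}(x,t_{0})\uparrow\infty$, which follows from the same lower bound and strict monotonicity (your preliminary claim $h^{(-1)}(x,t_{0})\sim(1-\gamma)\ln x$ is correct but not actually needed). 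Your argument is more self-contained and makes transparent exactly where the atom at $b$ enters (consistently with the failure for the Lebesgue measure in Example 6.2), while the paper's route produces as byproducts the asymptotics (\ref{u-x-limit})--(\ref{u-xx-limit}) and the positivity of $u_{xxx}$ in (\ref{prudence-aux}), quantities that are of independent interest and echoed later in the prudence analysis. Both proofs are valid.
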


\begin{proof}
Let $t_{0}\geq 0$. From representation (\ref{ineq of h(x,0)}) we have that 
\begin{equation*}
h\left( z,t_{0}\right) =\int_{a}^{\left( \frac{1}{1-\gamma }\right)
^{-}}e^{zy-\frac{1}{2}t_{0}y^{2}}\mu (dy)+e^{\frac{1}{1-\gamma }z-\frac{1}{2}%
\left( \frac{1}{1-\gamma }\right) ^{2}t_{0}},
\end{equation*}%
and, in turn, the dominated convergence theorem implies 
\begin{equation}
\lim_{z\uparrow \infty }\frac{h(z,t_{0})}{e^{\frac{1}{1-\gamma }z-\frac{1}{2}%
\left( \frac{1}{1-\gamma }\right) ^{2}t_{0}}}=1.  \label{limit of h(x,t)}
\end{equation}%
Therefore, from (\ref{exp tran}), together with the strict monotonicity and
full range of $h(z,t_{0})$, we deduce that 
\begin{equation}
\lim_{x\uparrow \infty }\frac{u_{x}(x,t_{0})}{x^{\gamma -1}e^{-\frac{\gamma 
}{2(1-\gamma )}t_{0}}}=1,  \label{u-x-limit}
\end{equation}%
since 
\begin{equation*}
\lim_{x\uparrow \infty }\frac{u_{x}(x,t_{0})}{x^{\gamma -1}e^{-\frac{\gamma 
}{2(1-\gamma )}t_{0}}}=\lim_{z\uparrow \infty }\frac{e^{-z+\frac{t_{0}}{2}}}{%
h^{\gamma -1}(z,t_{0})e^{-\frac{\gamma }{2(1-\gamma )}t_{0}}}
\end{equation*}%
\begin{equation*}
=\lim_{z\uparrow \infty }\left( \frac{h(z,t_{0})}{e^{\frac{1}{1-\gamma }z-%
\frac{1}{2}\left( \frac{1}{1-\gamma }\right) ^{2}t_{0}}}\right) ^{1-\gamma
}=1.
\end{equation*}%
Next, we claim that 
\begin{equation}
\lim_{x\uparrow \infty }\frac{u_{xx}(x,t_{0})}{x^{\gamma -2}e^{-\frac{\gamma 
}{2(1-\gamma )}t_{0}}}=\frac{1}{\gamma -1}.  \label{u-xx-limit}
\end{equation}%
To prove this, it suffices to show that for any $t_{0}\geq 0$, $%
u_{x}(x,t_{0})$ is convex since the above would then follow from the
arguments in Lemma 3.1 (ii) in \cite{HZ99}. To this end, differentiating (%
\ref{exp tran}) yields 
\begin{equation}
u_{xxx}\left( h(z,t_{0}),t_{0}\right) \left( h_{z}(z,t_{0})\right)
^{2}+u_{xx}(h(z,t_{0}),t_{0})h_{zz}(z,t_{0})=e^{-z+\frac{t_{0}}{2}}.
\label{twice diff}
\end{equation}%
The strict convexity of $h$ and the strict concavity of $u$ then give%
\begin{equation}
u_{xxx}\left( h(z,t_{0}),t_{0}\right) >0,  \label{prudence-aux}
\end{equation}%
and using the strict monotonicity and full range of $h$ we conclude.

Combining (\ref{u-x-limit}) and (\ref{u-xx-limit}) yields 
\begin{equation*}
\lim_{x\uparrow \infty }\frac{r(x,t_{0})}{x}=\lim_{x\uparrow \infty }\left( -%
\frac{u_{x}(x,t_{0})}{xu_{xx}(x,t_{0})}\right)
\end{equation*}%
\begin{equation*}
=\lim_{x\uparrow \infty }\left( -\frac{u_{x}(x,t_{0})}{x^{\gamma -1}e^{-%
\frac{\gamma }{2(1-\gamma )}t_{0}}}\left( \frac{u_{xx}(x,t_{0})}{x^{\gamma
-2}e^{-\frac{\gamma }{2(1-\gamma )}t_{0}}}\right) ^{-1}\right) =\frac{1}{%
1-\gamma }.
\end{equation*}
\end{proof}

\bigskip

We stress that assumption (\ref{asymptotic1}), or equivalently (\ref%
{consequence1}), \textit{cannot} be weakened. Indeed, as we will see in
example 6.2,\textbf{\ }$\ $where we take the measure to be the Lebesgue on $%
\left[ a,b\right] ,$ and thus there is no mass at $b,$ the spatial turnpike
property does not hold.

\begin{corollary}
Suppose that the initial datum $u_{0}$ satisfies the asymptotic property (%
\ref{asymptotic1}). Then, for each $t_{0}\geq 0,$%
\begin{equation}
\lim_{x\uparrow \infty }r_{x}\left( x,t_{0}\right) =\frac{1}{1-\gamma }.
\label{x-limit-derivative}
\end{equation}
\end{corollary}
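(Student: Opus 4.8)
The plan is to exploit the asymptotic expansions already established in the proof of Proposition 3.3, namely that under Assumption 1 one has, for each fixed $t_{0}\geq 0$,
\begin{equation*}
u_{x}(x,t_{0})\sim x^{\gamma-1}e^{-\frac{\gamma}{2(1-\gamma)}t_{0}}\quad\text{and}\quad u_{xx}(x,t_{0})\sim \frac{1}{\gamma-1}x^{\gamma-2}e^{-\frac{\gamma}{2(1-\gamma)}t_{0}},
\end{equation*}
as $x\uparrow\infty$. Since $r=-u_{x}/u_{xx}$, we have
\begin{equation*}
r_{x}(x,t_{0})=-1+\frac{u_{x}(x,t_{0})u_{xxx}(x,t_{0})}{u_{xx}(x,t_{0})^{2}},
\end{equation*}
so it suffices to show that the second term tends to $\frac{1}{1-\gamma}+1=\frac{2-\gamma}{1-\gamma}$ as $x\uparrow\infty$. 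Given the expansions for $u_{x}$ and $u_{xx}$, this reduces to establishing the third-order asymptotic
\begin{equation*}
\lim_{x\uparrow\infty}\frac{u_{xxx}(x,t_{0})}{x^{\gamma-3}e^{-\frac{\gamma}{2(1-\gamma)}t_{0}}}=(\gamma-1)(\gamma-2).
\end{equation*}

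The natural way to get this is to continue the bootstrapping used for $u_{xx}$: transfer the question from $u$ to the harmonic function $h$ via $u_{x}(h(z,t_{0}),t_{0})=e^{-z+t_{0}/2}$, where by (\ref{limit of h(x,t)}) we know $h(z,t_{0})\sim e^{\frac{1}{1-\gamma}z-\frac{1}{2}(\frac{1}{1-\gamma})^{2}t_{0}}$. First I would sharpen (\ref{limit of h(x,t)}) to a statement about the derivatives $h_{z}, h_{zz}, h_{zzz}$ as $z\uparrow\infty$: writing $h(z,t_{0})=\int_{a}^{(\frac{1}{1-\gamma})^{-}}e^{zy-\frac12 t_{0}y^{2}}\mu(dy)+e^{\frac{1}{1-\gamma}z-\frac12(\frac{1}{1-\gamma})^{2}t_{0}}$ as in the proof of Proposition 3.3, differentiating under the integral sign (permitted as noted after (\ref{r-convex})), and applying dominated convergence to each integral term, one gets
\begin{equation*}
\frac{\partial^{i}h(z,t_{0})}{\partial z^{i}}\sim\Big(\tfrac{1}{1-\gamma}\Big)^{i}e^{\frac{1}{1-\gamma}z-\frac12(\frac{1}{1-\gamma})^{2}t_{0}},\qquad i=0,1,2,\dots
\end{equation*}
Then, differentiating (\ref{exp tran}) three times (extending (\ref{twice diff})), one obtains $u_{xxx}(h(z,t_{0}),t_{0})$ as an explicit rational expression in $e^{-z+t_{0}/2}$ and $h_{z},h_{zz},h_{zzz}$; substituting the asymptotics above, using $h(z,t_{0})\sim e^{\frac{1}{1-\gamma}z-\frac12(\frac{1}{1-\gamma})^{2}t_{0}}$ to re-express everything in powers of $h$, and using the full range and strict monotonicity of $h(\cdot,t_{0})$ to pass to the $x$-variable, yields the desired third-order asymptotic.

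The main obstacle is not any single deep idea but controlling the error terms carefully enough to extract the \emph{third-order} coefficient: one needs that the contribution of $\int_{a}^{(\frac{1}{1-\gamma})^{-}}y^{i}e^{zy-\frac12 t_{0}y^{2}}\mu(dy)$ is genuinely negligible compared to $(\frac{1}{1-\gamma})^{i}e^{\frac{1}{1-\gamma}z-\frac12(\frac{1}{1-\gamma})^{2}t_{0}}$ after dividing, which follows because the integrand is supported strictly below $\frac{1}{1-\gamma}$, so the ratio decays exponentially in $z$ — the same mechanism as in (\ref{ineq of h(x,0)}) and (\ref{limit of h(x,t)}). Once the three asymptotics for $u_{x}, u_{xx}, u_{xxx}$ are in hand, the conclusion $\lim_{x\uparrow\infty}r_{x}(x,t_{0})=-1+\frac{(\gamma-1)(\gamma-2)}{(\gamma-1)^{2}}=-1+\frac{\gamma-2}{\gamma-1}=\frac{1}{1-\gamma}$ is immediate. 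An alternative, slightly slicker route avoiding the third derivative: combine (\ref{r-increasing}), which gives $r_{x}(x,t_{0})=\frac{1}{r(x,t_{0})}\int_{a}^{b}y^{2}e^{yh^{(-1)}(x,t_{0})-\frac12 y^{2}t_{0}}\mu(dy)=\frac{h_{zz}(h^{(-1)}(x,t_{0}),t_{0})}{r(x,t_{0})}$, with the asymptotics $h_{zz}(z,t_{0})\sim(\frac{1}{1-\gamma})^{2}e^{\frac{1}{1-\gamma}z-\frac12(\frac{1}{1-\gamma})^{2}t_{0}}$ and $r(x,t_{0})=h_{z}(h^{(-1)}(x,t_{0}),t_{0})\sim\frac{1}{1-\gamma}x$ (the latter being exactly Proposition 3.3, since $h(h^{(-1)}(x,t_{0}),t_{0})=x$ forces $e^{\frac{1}{1-\gamma}h^{(-1)}(x,t_{0})-\frac12(\frac{1}{1-\gamma})^{2}t_{0}}\sim x$); dividing gives $r_{x}(x,t_{0})\to\frac{(\frac{1}{1-\gamma})^{2}x}{\frac{1}{1-\gamma}x}=\frac{1}{1-\gamma}$ directly. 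I would present this second argument as the main proof, as it uses only the first- and second-order behavior of $h$ already near at hand.
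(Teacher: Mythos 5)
Your second argument (the one you present as the main proof) is correct, but it is a genuinely different route from the paper's. The paper's proof is a one-liner: by (\ref{r-convex}), $r(\cdot,t_{0})$ is convex, so $r_{x}(\cdot,t_{0})$ is increasing and $\lim_{x\uparrow\infty}r_{x}(x,t_{0})$ exists in $(0,\infty]$; since Proposition 3 gives $\lim_{x\uparrow\infty}r(x,t_{0})/x=\frac{1}{1-\gamma}$, an elementary L'H\^opital/mean-value argument identifies the limit of $r_{x}$ with that of $r/x$. You instead compute the limit directly from the representation (\ref{r-increasing}), $r_{x}(x,t_{0})=h_{zz}\left(h^{(-1)}(x,t_{0}),t_{0}\right)/r(x,t_{0})$, establishing $h_{zz}(z,t_{0})\sim\left(\frac{1}{1-\gamma}\right)^{2}e^{\frac{1}{1-\gamma}z-\frac{1}{2}\left(\frac{1}{1-\gamma}\right)^{2}t_{0}}$ by dominated convergence (which does require, and correctly uses, the mass at $b=\frac{1}{1-\gamma}$ from Lemma 2), together with $e^{\frac{1}{1-\gamma}h^{(-1)}(x,t_{0})-\frac{1}{2}\left(\frac{1}{1-\gamma}\right)^{2}t_{0}}\sim x$ from (\ref{limit of h(x,t)}) and $r(x,t_{0})\sim\frac{x}{1-\gamma}$ from Proposition 3. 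What the paper's argument buys is brevity and no need to redo asymptotics (convexity alone supplies existence of the limit, Proposition 3 supplies its value); what yours buys is an explicit, self-contained identification that also yields the second-order asymptotics of $h$ and makes transparent exactly where the atom at the right endpoint enters. Two small caveats: your first (discarded) route via $u_{xxx}$ has an inconsistency in constants --- you quote $u_{xx}\sim\frac{1}{\gamma-1}x^{\gamma-2}e^{-\frac{\gamma}{2(1-\gamma)}t_{0}}$ (as in the paper's (\ref{u-xx-limit}), whose constant is itself a typo and should be $\gamma-1$) but then compute with $u_{xx}^{2}\sim(\gamma-1)^{2}x^{2\gamma-4}e^{-\frac{\gamma}{1-\gamma}t_{0}}$; since you do not rely on that route, this does not affect your proof, but it should be reconciled if you keep the remark.
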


\begin{proof}
From (\ref{r-convex}) we have that, for each $t_{0}\geq 0,$ $\lim_{x\uparrow
\infty }r_{x}\left( x,t_{0}\right) $ exists, and we easily conclude.
\end{proof}

\section{Temporal (turnpike) asymptotic results}

We investigate the temporal asymptotic behavior of the relative risk
tolerance as $t\uparrow \infty ,$ for each $x_{0}>0,$ under asymptotic
assumption of the initial marginal utility for large wealth levels. This is
the genuine "turnpike" analogue of similar results in classical expected
utility models and the main finding herein. It shows that the relative risk
tolerance will converge to the \textit{left-end }of the support of the
underlying measure $\mu $.

As in the spatial case, we first relate the properties of the measure to the
asymptotic behavior of the initial (marginal) datum.

\textbf{Assumption 2:}\textit{\ There exists }$\gamma <1,$\textit{\ }$\gamma
\neq 0$, \textit{such that for all }$\gamma ^{\prime }\in \left( \gamma
,1\right) ,$%
\begin{equation}
\lim_{x\uparrow \infty }\frac{u_{0}^{\prime }\left( x\right) }{x^{\gamma
^{\prime }-1}}=0,  \label{upper-gamma}
\end{equation}%
\textit{while, for all }$\gamma ^{^{\prime \prime }}<\gamma ,$%
\begin{equation}
\lim_{x\uparrow \infty }\frac{u_{0}^{\prime }\left( x\right) }{x^{\gamma
^{^{\prime \prime }}-1}}=\infty .  \label{lower-gamma}
\end{equation}

As we show next, the above assumption is directly related to a condition
introduced in \cite{Huberman-Ross} and \cite{DRB99}, for a discrete and a
continuous-time case, respectively.

\begin{lemma}
Assumption 2 is equivalent to the function $u_{0}^{\prime }\left( x\right) $
varying regularly at infinity with exponent $\gamma -1$, i.e. for all $k>0,$%
\begin{equation}
\lim_{x\uparrow \infty }\frac{u_{0}^{\prime }(kx)}{u_{0}^{\prime }(x)}%
=k^{\gamma -1}.  \label{H-R}
\end{equation}
\end{lemma}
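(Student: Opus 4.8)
The plan is to establish the equivalence between Assumption 2 and the regular variation condition \eqref{H-R} by proving both implications directly from the definitions, exploiting the fact that a positive measurable function varying regularly with exponent $\gamma-1$ is, by the uniform convergence theorem of Karamata, of the form $x^{\gamma-1}L(x)$ with $L$ slowly varying, and that slowly varying functions are $o(x^{\varepsilon})$ and $\omega(x^{-\varepsilon})$ for every $\varepsilon>0$.

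First I would show that \eqref{H-R} implies Assumption 2. Writing $u_0'(x)=x^{\gamma-1}L(x)$ with $L$ slowly varying, for $\gamma'\in(\gamma,1)$ we have
\begin{equation*}
\frac{u_0'(x)}{x^{\gamma'-1}}=\frac{L(x)}{x^{\gamma'-\gamma}},
\end{equation*}
and since $\gamma'-\gamma>0$ and any slowly varying $L$ satisfies $L(x)/x^{\delta}\to 0$ for every $\delta>0$ (a standard Karamata property, e.g. from the representation theorem for slowly varying functions), the limit \eqref{upper-gamma} follows. Symmetrically, for $\gamma''<\gamma$,
\begin{equation*}
\frac{u_0'(x)}{x^{\gamma''-1}}=L(x)\,x^{\gamma-\gamma''}\to\infty
\end{equation*}
because $\gamma-\gamma''>0$ and $L(x)\,x^{\delta}\to\infty$ for every $\delta>0$, giving \eqref{lower-gamma}.

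Conversely, I would assume Assumption 2 and deduce \eqref{H-R}. Fix $k>0$; without loss of generality take $k>1$ (the case $k<1$ follows by replacing $x$ with $kx$, and $k=1$ is trivial). Pick any $\gamma'\in(\gamma,1)$ and $\gamma''\in(\gamma',\gamma)$ — wait, one needs $\gamma''<\gamma<\gamma'$, so choose $\gamma''<\gamma<\gamma'<1$. Using monotonicity is not available, so instead I would sandwich: from \eqref{upper-gamma} and \eqref{lower-gamma} applied at both $x$ and $kx$, for any $\varepsilon>0$ there is $X$ such that for $x\ge X$,
\begin{equation*}
x^{\gamma-1-\varepsilon}\le u_0'(x)\le x^{\gamma-1+\varepsilon},
\end{equation*}
since $\gamma-1-\varepsilon$ can be taken of the form $\gamma''-1$ and $\gamma-1+\varepsilon$ of the form $\gamma'-1$ for suitable choices. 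This yields
\begin{equation*}
k^{\gamma-1-\varepsilon}\,\frac{(kx)^{\gamma-1-\varepsilon}}{x^{\gamma-1+\varepsilon}}\cdot x^{\gamma-1+\varepsilon}\le\cdots
\end{equation*}
more cleanly: $\dfrac{u_0'(kx)}{u_0'(x)}$ lies between $\dfrac{(kx)^{\gamma-1-\varepsilon}}{x^{\gamma-1+\varepsilon}}=k^{\gamma-1-\varepsilon}x^{-2\varepsilon}$ and $\dfrac{(kx)^{\gamma-1+\varepsilon}}{x^{\gamma-1-\varepsilon}}=k^{\gamma-1+\varepsilon}x^{2\varepsilon}$, and letting $x\uparrow\infty$ this does not immediately pin down the ratio. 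The fix is to exploit that the bounds hold \emph{simultaneously} at comparable scales: the correct sandwich is $k^{\gamma-1-\varepsilon}\le \liminf \frac{u_0'(kx)}{u_0'(x)}$ using $u_0'(kx)\ge (kx)^{\gamma-1-\varepsilon}$ and $u_0'(x)\le x^{\gamma-1+\varepsilon}$ only after noting $\varepsilon$ is arbitrary and the power of $x$ washes out upon taking $\varepsilon\to 0$ after the limit in $x$ — so one takes $\limsup_{x}$ first, obtaining $\limsup \le k^{\gamma-1+\varepsilon}\cdot\limsup x^{2\varepsilon}$, which is still problematic.

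The main obstacle, then, is exactly this: Assumption 2 as literally stated controls $u_0'$ only up to subpolynomial factors that could oscillate, so deducing the \emph{ratio} limit \eqref{H-R} requires more than the crude two-sided polynomial bounds. I expect the intended argument routes through the harmonic representation instead: by the already-established machinery (Lemma for the spatial case, and the forthcoming temporal analysis), Assumption 2 is equivalent to $b=\frac{1}{1-\gamma}$ being the right endpoint of $\mathrm{supp}(\mu)$, and from the integral form $(u_0')^{(-1)}(x)=\int_a^b x^{-y}\mu(dy)$ one reads off directly that $(u_0')^{(-1)}$ — hence its inverse $u_0'$ — is regularly varying with the stated exponent, since $\int_a^b (kx)^{-y}\mu(dy)/\int_a^b x^{-y}\mu(dy)\to k^{-b}$ by dominated convergence concentrated at the endpoint $b$. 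Thus the cleanest proof is: (i) show Assumption 2 $\iff b=\frac{1}{1-\gamma}$ (this is the content the authors flag as established "and vice-versa"); (ii) show $b=\frac{1}{1-\gamma}\iff$ \eqref{H-R} via the integral representation and dominated convergence. Step (ii) is routine; the genuine work is step (i), which is where the regular-variation hypothesis is actually used and where the lack of a comparison principle forces one to argue through the explicit transform rather than via PDE monotonicity.
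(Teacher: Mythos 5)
Your first direction (\eqref{H-R} $\Rightarrow$ Assumption 2) is correct and essentially interchangeable with the paper's: the paper argues by contradiction directly from \eqref{H-R}, while you write $u_{0}^{\prime }(x)=x^{\gamma -1}L(x)$ with $L$ slowly varying and use the standard bounds $L(x)=o(x^{\delta })$, $L(x)x^{\delta }\rightarrow \infty $; both work.

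For the converse you set up the two-sided polynomial sandwich, observed that it only yields $k^{\gamma -1-\varepsilon }x^{-2\varepsilon }\leq u_{0}^{\prime }(kx)/u_{0}^{\prime }(x)\leq k^{\gamma -1+\varepsilon }x^{2\varepsilon }$ for $x$ beyond an $\varepsilon $-dependent threshold, and discarded it as inconclusive. You should know that this discarded argument is precisely the paper's proof: it multiplies the same two inequalities and concludes by ``sending first $\delta \downarrow 0$ and then $x\uparrow \infty $.'' Your objection applies verbatim to that step -- the inequalities hold only for $x\geq X(\delta ,k)$, so the two limits cannot be interchanged without a uniformity that is not established; and indeed, for a general decreasing marginal, Assumption 2 does \emph{not} imply \eqref{H-R} (for instance $u_{0}^{\prime }(x)=x^{\gamma -1}e^{\sqrt{\ln x}\sin \sqrt{\ln x}}$, eventually decreasing when $\gamma <1/2$, satisfies \eqref{upper-gamma} and \eqref{lower-gamma}, yet the ratio in \eqref{H-R} oscillates). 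So the obstacle you flag is genuine, and the implication really does need the structure of admissible initial data. Your alternative route supplies exactly that: Assumption 2 is equivalent to \eqref{right-support} (the paper proves this independently of the present lemma, so there is no circularity), and from $\left( u_{0}^{\prime }\right) ^{(-1)}(x)=\int_{a}^{b}x^{-y}\mu (dy)$ one reads off \eqref{H-R}. To make your step (ii) precise: the relevant limit for $\left( u_{0}^{\prime }\right) ^{(-1)}$ is taken as its argument tends to $0^{+}$ (large wealth corresponds to small marginal), the convergence of the ratio to $k^{-b}$ is a concentration-of-mass estimate (split the integral at $b-\epsilon $; the piece over $[a,b-\epsilon ]$ is negligible relative to the piece near $b$) rather than literal dominated convergence, and the transfer of regular variation from $\left( u_{0}^{\prime }\right) ^{(-1)}$ (index $-b$ at $0^{+}$) to $u_{0}^{\prime }$ (index $-1/b=\gamma -1$ at infinity) uses the standard theorem on inverses of monotone regularly varying functions. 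With these details filled in, your route gives a complete proof within the paper's class of forward criteria, at the cost of invoking the measure representation; the paper's purely real-analytic argument is shorter but, as written, does not justify its interchange of limits.
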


\begin{proof}
We first show that condition (\ref{H-R}) implies (\ref{upper-gamma}) and (%
\ref{lower-gamma}). We argue by contradiction. Suppose that (\ref%
{upper-gamma}) does not hold. Then, there exists $\gamma ^{\prime }\in
(\gamma ,1)$ and $\varepsilon >0$ such that for $x$ large enough, $\frac{%
u_{0}^{\prime }\left( x\right) }{x^{\gamma ^{\prime }-1}}>\varepsilon .$ On
the other hand, condition (\ref{H-R}) implies that, for all $k>0$ and $x$
large enough, $\left \vert \frac{u_{0}^{\prime }(kx)}{u_{0}^{\prime }\left(
x\right) k^{\gamma -1}}-1\right \vert <\varepsilon .$ Thus, for large enough 
$x$,

\begin{equation*}
0<\frac{u_{0}^{\prime }(kx)}{(kx)^{\gamma \prime -1}}=\frac{u_{0}^{\prime
}(kx)}{u_{0}^{\prime }(x)k^{\gamma -1}}\frac{u_{0}^{\prime }(x)}{x^{\gamma
^{\prime }-1}}k^{\gamma -\gamma ^{\prime }}<(1+\varepsilon )\frac{%
u_{0}^{\prime }(x)}{x^{\gamma ^{\prime }-1}}k^{\gamma -\gamma ^{\prime }}.
\end{equation*}%
Since $\gamma -\gamma ^{\prime }<0$, $\lim_{k\uparrow \infty }\frac{%
u_{0}^{\prime }(kx)}{(kx)^{\gamma ^{\prime }-1}}=0,$ and we conclude.
Working similarly, we establish (\ref{lower-gamma}).

Next, we show the reverse direction. Assume that (\ref{lower-gamma}) and (%
\ref{upper-gamma}) hold. Then, for all $\delta ,k>0$ and $x$ large enough, 
\begin{equation*}
\frac{u_{0}^{\prime }(kx)}{(kx)^{\gamma +\delta -1}}<1~~~\  \  \  \text{and \  \ 
}~~\frac{x^{\gamma -\delta -1}}{u_{0}^{\prime }(x)}<1.
\end{equation*}%
Multiplying these two equations and rearranging gives, for all $\delta >0,$%
\begin{equation*}
\frac{u_{0}^{\prime }(kx)}{u_{0}^{\prime }(x)}<\frac{(kx)^{\gamma +\delta -1}%
}{x^{\gamma -\delta -1}}=k^{\gamma +\delta -1}x^{2\delta }.
\end{equation*}%
Similarly, it follows from interchanging $kx$ and $x$ in the above two
inequalities that%
\begin{equation*}
\frac{u_{0}^{\prime }(kx)}{u_{0}^{\prime }(x)}>\frac{(kx)^{\gamma -\delta -1}%
}{x^{\gamma +\delta -1}}=k^{\gamma -\delta -1}x^{-2\delta },
\end{equation*}%
and condition (\ref{H-R}) follows by sending first $\delta \downarrow 0$ and
then $x\uparrow \infty .$
\end{proof}

Assumption 2 is weaker than Assumption 1, and implies, as we show next, that
the measure $\mu $ has support with right-end point at $\frac{1}{1-\gamma },$
but without necessarily having a mass therein. \  \ 

\begin{lemma}
Assumption 2 holds if and only if the measure $\mu $ has finite support with
its right boundary at $\frac{1}{1-\gamma },$ namely,%
\begin{equation}
\inf \left \{ y>0:\mu \left( \left( y,\infty \right) \right) =0\right \} =%
\frac{1}{1-\gamma }.  \label{right-support}
\end{equation}
\end{lemma}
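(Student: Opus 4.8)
The plan is to work through the key integral identity $u_x(h(z,0),0)=e^{-z}$ together with the representation $h(z,0)=\int_a^b e^{yz}\mu(dy)$, exactly as in the proof of Lemma 3.2 (the spatial case), but now extracting only the *location* of the right endpoint rather than also the mass there. As in (\ref{u-x-h}), for any fixed $\gamma'<1$ one has
\begin{equation*}
\lim_{x\uparrow\infty}\frac{u_0'(x)}{x^{\gamma'-1}}
=\lim_{z\uparrow\infty}\left(\frac{h(z,0)}{e^{\frac{1}{1-\gamma'}z}}\right)^{1-\gamma'}
=\lim_{z\uparrow\infty}\int_a^b e^{z\left(y-\frac{1}{1-\gamma'}\right)}\mu(dy),
\end{equation*}
so Assumption 2 translates directly into a statement about the asymptotics of $\int_a^b e^{z(y-c)}\mu(dy)$ as $z\uparrow\infty$ for various thresholds $c$.

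First I would prove the forward direction. Denote $b^*:=\inf\{y>0:\mu((y,\infty))=0\}$, the right end of the support (possibly $+\infty$ a priori). For $\gamma'\in(\gamma,1)$ write $c'=\frac{1}{1-\gamma'}>\frac{1}{1-\gamma}$, and for $\gamma''<\gamma$ write $c''=\frac{1}{1-\gamma''}<\frac{1}{1-\gamma}$. If $b^*>\frac{1}{1-\gamma}$, pick $\gamma'\in(\gamma,1)$ with $c'<b^*$; then, exactly as in (\ref{ineq of h(x,0)}), for small $\varepsilon>0$,
\begin{equation*}
\int_a^b e^{z(y-c')}\mu(dy)\ \geq\ e^{\varepsilon z}\,\mu\!\left(\left[c'+\varepsilon,\,b^*\right]\right)\ \longrightarrow\ \infty,
\end{equation*}
since $\mu([c'+\varepsilon,b^*])>0$ for $\varepsilon$ small by definition of $b^*$. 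This contradicts (\ref{upper-gamma}), so $b^*\le\frac{1}{1-\gamma}$. Conversely, if $b^*<\frac{1}{1-\gamma}$, pick $\gamma''<\gamma$ with $b^*<c''<\frac{1}{1-\gamma}$; then $y-c''\le b^*-c''<0$ on the support, so by dominated convergence $\int_a^b e^{z(y-c'')}\mu(dy)\to 0$, contradicting (\ref{lower-gamma}). Hence $b^*=\frac{1}{1-\gamma}$, which in particular forces $b^*<\infty$, i.e. finite support.

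For the reverse direction, assume (\ref{right-support}), i.e. $\sup(\operatorname{supp}\mu)=\frac{1}{1-\gamma}=:c$. For $\gamma'\in(\gamma,1)$ we have $c'=\frac{1}{1-\gamma'}>c$, so $y-c'\le c-c'<0$ $\mu$-a.e., and dominated convergence gives $\int_a^b e^{z(y-c')}\mu(dy)\to 0$, which is (\ref{upper-gamma}). For $\gamma''<\gamma$ we have $c''=\frac{1}{1-\gamma''}<c$; since $c$ is the right end of the support, $\mu((c''+\varepsilon,b])>0$ for all small $\varepsilon>0$, and the same lower bound $e^{\varepsilon z}\mu((c''+\varepsilon,b])\to\infty$ shows $\int_a^b e^{z(y-c'')}\mu(dy)\to\infty$, which is (\ref{lower-gamma}). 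Translating back through the displayed limit above yields Assumption 2. The main subtlety — and the place to be careful — is the direction of the strict inequalities $\gamma'>\gamma$ versus $\gamma''<\gamma$ and the corresponding reversal under $c=\frac{1}{1-\gamma}$ (decreasing in $\gamma$): one must verify that $\mu$ genuinely charges every left-neighborhood of its right endpoint, which is immediate from the definition of $b^*$ but is exactly what rules out the degenerate possibility that the stated infimum is not attained as a true accumulation of mass. Everything else is the same dominated-convergence / exponential-lower-bound dichotomy already used in Lemmas 3.2 and 4.1.
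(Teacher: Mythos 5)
Your proof is correct and follows essentially the same route as the paper: translate Assumption 2 through $u_x(h(z,0),0)=e^{-z}$ into the asymptotics of $\int_a^b e^{z(y-c)}\mu(dy)$, then use the dichotomy between the exponential lower bound $e^{\varepsilon z}\mu([c+\varepsilon,b])$ when mass lies above the threshold and dominated convergence when it does not. Your write-up is in fact a slightly cleaner organization of the paper's argument (arguing directly from $b\gtrless\frac{1}{1-\gamma}$ to a contradiction with (\ref{upper-gamma}) or (\ref{lower-gamma})), and it supplies the steps the paper leaves as "the rest follows easily."
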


\begin{proof}
We show that Assumption 2 implies property (\ref{right-support}).\textit{\ }%
For each $\gamma ^{\prime }\in \left( \gamma ,1\right) ,$ we deduce from (%
\ref{upper-gamma}) that 
\begin{equation*}
0=\lim_{x\uparrow \infty }\frac{u_{x}(x,0)}{x^{\gamma ^{\prime }-1}}%
=\lim_{z\uparrow \infty }\frac{u_{x}(h(z,0),0)}{(h(z,0))^{^{\gamma ^{\prime
}-1}}}=\lim_{z\uparrow \infty }\left( \frac{h\left( z,0\right) }{e^{\frac{z}{%
1-\gamma ^{\prime }}}}\right) ^{1-\gamma ^{\prime }},
\end{equation*}%
and, thus, 
\begin{equation}
\lim_{z\uparrow \infty }\int_{a}^{b}e^{z\left( y-\frac{1}{1-\gamma ^{\prime }%
}\right) }\mu \left( dy\right) =0.  \label{zero-limit}
\end{equation}%
Next, observe that if $b\geq 1,$ then it will contradict the above limit,
and thus we need to have $b<1.$ Assume now that there exists $\gamma
^{\prime }\in \left( \gamma ,1\right) $ with $b=\frac{1}{1-\gamma ^{\prime }}%
.$ Then, for each $\tilde{\gamma}\in \left( \gamma ,\gamma ^{\prime }\right) 
$ we have $\frac{1}{1-\tilde{\gamma}}<\frac{1}{1-\gamma ^{\prime }}$ and the
above gives, for $\varepsilon $ small enough,

\begin{equation*}
\lim_{z\uparrow \infty }\left( \int_{a}^{\left( \frac{1}{1-\tilde{\gamma}}%
+\varepsilon \right) ^{-}}e^{z\left( y-\frac{1}{1-\tilde{\gamma}}\right)
}\mu \left( dy\right) +\int_{\frac{1}{1-\tilde{\gamma}}+\varepsilon
}^{b}e^{z\left( y-\frac{1}{1-\tilde{\gamma}}\right) }\mu \left( dy\right)
\right) =0.
\end{equation*}%
Therefore, it must be that $\mu \left( \lbrack \frac{1}{1-\tilde{\gamma}}%
+\varepsilon ,b]\right) =0.$ Sending $\varepsilon \downarrow 0,$ gives $\mu
\left( \left( \frac{1}{1-\tilde{\gamma}},b\right] \right) =0,$ which is a
contradiction. Thus, we must have $b\leq \frac{1}{1-\gamma }.$ Similarly,
using (\ref{lower-gamma}) we obtain that $b\geq \frac{1}{1-\gamma },$ and,
thus, $b=\frac{1}{1-\gamma }.$

To show the reverse direction, we first observe that property (\ref%
{right-support}) and the dominated convergence theorem yield that, for any $%
\varepsilon >0,$

\begin{equation*}
\lim_{z\uparrow \infty }h(z,0)e^{-(\frac{1}{1-\gamma }+\varepsilon
)z}=\lim_{z\uparrow \infty }\int_{a}^{\frac{1}{1-\gamma }}e^{z\left(
y-\left( \frac{1}{1-\gamma }+\varepsilon \right) \right) }\mu (dy)=0.
\end{equation*}%
Then, setting $\gamma ^{\prime }$ such that $\frac{1}{1-\gamma ^{\prime }}=%
\frac{1}{1-\gamma }+\varepsilon ,$ we deduce (\ref{upper-gamma}) for all $%
\gamma ^{\prime }\in \left( \gamma ,1\right) $.

The rest of the proof follows easily and it is thus omitted.
\end{proof}

\bigskip

We have so far established that under Assumption 2 the associated measure $%
\mu $ has a finite right boundary (but not necessarily a mass) at $\frac{1}{%
1-\gamma }$, \ and vice-versa.

We now turn our attention to the left boundary of the support, denoted by $%
a, $ where 
\begin{equation}
a:=\inf \{y\geq 0:\mu \left( (0,y]\right) >0\}.  \label{a-point}
\end{equation}%
In the upcoming proofs we will frequently use the identity%
\begin{equation}
x_{0}=\int_{a}^{\frac{1}{1-\gamma }}e^{yh^{(-1)}(x_{0},t)-\frac{1}{2}%
y^{2}t}\mu (dy),  \label{x-h}
\end{equation}%
for $x_{0}>0,$ which follows directly from (\ref{h-representation}) for $b=%
\frac{1}{1-\gamma }.$

\begin{lemma}
Let $h^{\left( -1\right) }:\mathbb{D}_{+}\rightarrow \mathbb{R}$ be the
spatial inverse of $h,$ and $a$ as in (\ref{a-point}). Then, for each $%
x_{0}>0,$ $\lim_{t\uparrow \infty }h_{t}^{(-1)}(x_{0},t)$ exists and,
moreover, for $t\geq 0,$ 
\begin{equation}
\frac{a}{2}\leq h_{t}^{(-1)}(x_{0},t)\leq \frac{1}{2\left( 1-\gamma \right) }%
.  \label{h_inv deriv}
\end{equation}
\end{lemma}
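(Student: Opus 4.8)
The plan is to write $\phi(t):=h^{(-1)}(x_{0},t)$, so that the object of study is $\phi'(t)=h_{t}^{(-1)}(x_{0},t)$, and to express it through the integral representation of $h$. Differentiating identity (\ref{x-h}) in $t$ (differentiation under the integral sign being permissible, as throughout the paper) gives $0=\int_{a}^{1/(1-\gamma)}\bigl(y\phi'(t)-\tfrac{1}{2}y^{2}\bigr)w_{t}(y)\,\mu(dy)$, with $w_{t}(y):=e^{y\phi(t)-\frac{1}{2}y^{2}t}$, and hence
\[
\phi'(t)=\frac{1}{2}\,\frac{\int_{a}^{1/(1-\gamma)}y^{2}w_{t}(y)\,\mu(dy)}{\int_{a}^{1/(1-\gamma)}y\,w_{t}(y)\,\mu(dy)}=\frac{h_{zz}(\phi(t),t)}{2h_{z}(\phi(t),t)},
\]
the last equality being just (\ref{h-representation}); equivalently it follows from $h(\phi(t),t)=x_{0}$ and the heat equation (\ref{h pde}) in the form $h_{t}=-\tfrac12 h_{zz}$. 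Differentiability of $\phi$ in $t$ is guaranteed by the implicit function theorem, since $h$ is smooth and $h_{z}>0$.

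For the two bounds I would use that, under Assumption 2, $\mathrm{supp}(\mu)\subseteq[a,\frac{1}{1-\gamma}]$ (by the preceding lemma, which identifies the right endpoint as $\frac{1}{1-\gamma}$). Multiplying the pointwise inequalities $a\le y\le\frac{1}{1-\gamma}$ by $y\ge0$ and integrating against the nonnegative measure $w_{t}(y)\,\mu(dy)$ gives $a\le\frac{\int y^{2}w_{t}\mu}{\int y\,w_{t}\mu}\le\frac{1}{1-\gamma}$, whence $\frac{a}{2}\le\phi'(t)\le\frac{1}{2(1-\gamma)}$ for every $t\ge0$. This is exactly the displayed two-sided estimate.

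For the existence of the limit I would show that $\phi'$ is non-increasing. Differentiating the formula for $\phi'$ once more in $t$ and using $\partial_{t}w_{t}(y)=\bigl(y\phi'(t)-\tfrac12 y^{2}\bigr)w_{t}(y)=\tfrac12 y\bigl(2\phi'(t)-y\bigr)w_{t}(y)$, a short computation collapses to
\[
\frac{d}{dt}\bigl(2\phi'(t)\bigr)=-\frac{1}{2\int_{a}^{1/(1-\gamma)}y\,w_{t}(y)\,\mu(dy)}\int_{a}^{1/(1-\gamma)}y^{2}\bigl(y-2\phi'(t)\bigr)^{2}w_{t}(y)\,\mu(dy)\le0,
\]
the right-hand integrand being manifestly nonnegative. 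Thus $\phi'$ is non-increasing and bounded below by $\frac{a}{2}$, so $\lim_{t\uparrow\infty}\phi'(t)=\lim_{t\uparrow\infty}h_{t}^{(-1)}(x_{0},t)$ exists, which completes the proof.

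The main obstacle is the verification of the identity for $\frac{d}{dt}(2\phi'(t))$. Writing $M_{k}(t):=\int_{a}^{1/(1-\gamma)}y^{k}w_{t}(y)\,\mu(dy)$, so that $2\phi'=M_{2}/M_{1}$, one needs the elementary bookkeeping $M_{k}'=\phi'M_{k+1}-\tfrac12 M_{k+2}$ and then the purely algebraic identity $2M_{1}M_{2}M_{3}-M_{1}^{2}M_{4}-M_{2}^{3}=-M_{1}^{2}\int y^{2}(y-2\phi')^{2}w_{t}\,\mu(dy)$; dividing by $2M_{1}^{3}$ produces the claimed expression. Everything else is routine, and it is worth emphasizing that --- in contrast with \cite{HZ99} and \cite{B-Zheng} --- no comparison principle and no Feynman--Kac representation enter here: the existence of the temporal limit is obtained purely from the monotonicity of a bounded quantity.
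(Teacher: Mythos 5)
Your proposal is correct and follows essentially the same route as the paper: the same formula $h_{t}^{(-1)}(x_{0},t)=\tfrac{1}{2}\,h_{zz}/h_{z}$ evaluated at $z=h^{(-1)}(x_{0},t)$ (obtained equivalently from the heat equation or by differentiating the identity $h(h^{(-1)}(x_{0},t),t)=x_{0}$ under the integral), the same moment-ratio bounds using $\mathrm{supp}(\mu)\subseteq[a,\tfrac{1}{1-\gamma}]$, and the same monotonicity argument via the second time-derivative, whose integrand $\bigl(y\,h_{t}^{(-1)}-\tfrac12 y^{2}\bigr)^{2}=\tfrac14 y^{2}\bigl(y-2h_{t}^{(-1)}\bigr)^{2}$ matches your expression exactly. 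Your verification of the algebraic identity for $\frac{d}{dt}\bigl(2\phi'\bigr)$ is sound (and your weak inequality $\le 0$ is in fact the careful version, since equality occurs when $\mu$ is a single Dirac mass).
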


\begin{proof}
Let $x_{0}>0$ and observe that (\ref{h pde}) yields 
\begin{equation*}
h_{t}^{\left( -1\right) }\left( x_{0},t\right) =\frac{1}{2}\frac{%
h_{xx}\left( h^{\left( -1\right) }\left( x_{0},t\right) \right) }{%
h_{x}\left( h^{\left( -1\right) }\left( x_{0},t\right) ,t\right) }=\frac{1}{2%
}\frac{{\int_{a}^{\frac{1}{1-\gamma }}y^{2}e^{yh^{(-1)}(x_{0},t)-\frac{1}{2}%
y^{2}t}\mu (dy)}}{{\int_{a}^{\frac{1}{1-\gamma }}ye^{yh^{(-1)}(x_{0},t)-%
\frac{1}{2}y^{2}t}\mu (dy)}}
\end{equation*}%
and thus inequality (\ref{h_inv deriv}) holds, for all $t\geq 0.$

To show that $\lim_{t\uparrow \infty }h_{t}^{(-1)}(x_{0},t)$ exists, it
suffices to show that $h_{t}^{(-1)}(x_{0},t)$ is decreasing in time. Indeed,
direct calculations yield 
\begin{equation}
h_{tt}^{(-1)}(x_{0},t)=-\frac{{\int_{a}^{\frac{1}{1-\gamma }}\left(
yh_{t}^{(-1)}(x_{0},t)-\frac{1}{2}y^{2}\right) ^{2}e^{yh^{(-1)}(x_{0},t)-%
\frac{1}{2}y^{2}t}\mu (dy)}}{{\int_{a}^{\frac{1}{1-\gamma }%
}ye^{yh^{(-1)}(x_{0},t)-\frac{1}{2}y^{2}t}\mu (dy)}}<0.  \label{h_inv 2deriv}
\end{equation}%
Alternatively, differentiating $h\left( h^{\left( -1\right) }\left(
x_{0},t\right) ,t\right) =x_{0}$ twice yields, setting $z=h^{\left(
-1\right) }\left( x_{0},t\right) ,$%
\begin{equation*}
h_{tt}^{(-1)}(x_{0},t)h_{x}\left( z,t\right) +\left( h_{t}^{\left( -1\right)
}\left( x_{0},t\right) \right) ^{2}h_{xx}\left( z,t\right) +2h_{t}^{\left(
-1\right) }\left( x_{0},t\right) h_{xt}\left( z,t\right) +h_{tt}\left(
z,t\right) =0.
\end{equation*}%
We have that both $h_{x},$ $h_{xx}>0,$ as it follows directly from (\ref%
{h-representation}) and differentiation. Furthermore, the above quadratic in 
$h_{t}^{\left( -1\right) }\left( x,t\right) $ remains positive, which would
then yield that $h_{tt}^{(-1)}(x_{0},t)<0.$ Indeed, 
\begin{equation*}
h_{xt}^{2}\left( z,t\right) -h_{xx}\left( z,t\right) h_{tt}\left( z,t\right)
=h_{xxx}^{2}\left( z,t\right) -h_{xx}\left( z,t\right) h_{xxxx}\left(
z,t\right) <0,
\end{equation*}%
as it follows from (\ref{AC}).
\end{proof}

\bigskip

We are now ready to present one of the main findings herein, which yields
the limit as $t\uparrow \infty $ of the ratio $\frac{1}{t}h^{(-1)}(x_{0},t).$
We show that it converges to half of the lower-end of the measure's support.
Some related weaker results can be found in \cite{RS73}.

\begin{proposition}
Let $h^{\left( -1\right) }:\mathbb{D}_{+}\rightarrow \mathbb{R}$ be the
spatial inverse of the function $h$ (cf. (\ref{h-representation})) and let $%
a,b$ be the left and right end of the support, respectively, with $a=0^{+}$
or $a>0,$ and $b<\infty .$ Then, for each $x_{0}>0,$ the following
assertions hold.

i) It holds that 
\begin{equation}
\lim_{t\uparrow \infty }\frac{h^{(-1)}(x_{0},t)}{t}=\frac{a}{2}.
\label{limit-a/2}
\end{equation}

ii) Let 
\begin{equation}
\Delta \left( x_{0},t\right) :=\frac{h^{(-1)}(x_{0},t)}{t}-\frac{a}{2}.
\label{delta-difference}
\end{equation}%
If $a>0,$ then 
\begin{equation}
\left \vert \Delta \left( x_{0},t\right) \right \vert \leq \frac{1}{at}\ln
\left( \frac{\mu \left( \lbrack a,\frac{1}{1-\gamma }]\right) }{x_{0}}%
\right) ,\text{ \ if \ }\Delta \left( x_{0},t\right) <0,  \label{delta<0}
\end{equation}%
and 
\begin{equation}
x_{0}\geq \mu \left( \left[ a,a+\Delta \left( x_{0},t\right) \right] \right)
e^{\frac{1}{2}ta\Delta \left( x_{0},t\right) },\text{ \ if \ }\Delta \left(
x_{0},t\right) >0.  \label{delta>0}
\end{equation}%
If $a=0^{+},$ then $\Delta \left( x_{0},t\right) >0,$ and, moreover, for
each $\theta \in \left( 0,1\right) ,$ 
\begin{equation}
x_{0}\geq \mu \left( \left[ \Delta \left( x_{0},t\right) ,\left( 1+\theta
\right) \Delta \left( x_{0},t\right) \right] \right) e^{\frac{1}{2}t\left(
1-\theta ^{2}\right) \Delta ^{2}\left( x_{0},t\right) }.
\label{delta-zero-bound}
\end{equation}
\end{proposition}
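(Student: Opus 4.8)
The plan is to analyze the quantity $\Delta(x_0,t) = h^{(-1)}(x_0,t)/t - a/2$ directly through the defining identity
\[
x_0 = \int_a^{b} e^{y h^{(-1)}(x_0,t) - \frac12 y^2 t}\,\mu(dy)
= \int_a^{b} e^{t y\left(\frac{h^{(-1)}(x_0,t)}{t} - \frac12 y\right)}\,\mu(dy),
\]
which holds with $b = \frac{1}{1-\gamma}$ by (\ref{x-h}) (and with $b<\infty$ by hypothesis in the general statement). Writing $z_t := h^{(-1)}(x_0,t)$ and $p_t := z_t/t$, the exponent $ty(p_t - \tfrac12 y)$ is, for each fixed $t$, a concave parabola in $y$ maximized at $y = p_t$. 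The whole argument is about locating $p_t$ relative to the left endpoint $a$ of $\mathrm{supp}(\mu)$.

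For part (i): the two-sided bound $a/2 \le h_t^{(-1)}(x_0,t) \le \frac{1}{2(1-\gamma)}$ from Lemma (the one just proven, (\ref{h_inv deriv})) immediately gives, by integrating in $t$ from $0$ to $t$ and dividing by $t$, that
\[
\frac{a}{2} + \frac{h^{(-1)}(x_0,0)}{t} \le p_t \le \frac{1}{2(1-\gamma)} + \frac{h^{(-1)}(x_0,0)}{t},
\]
so $\liminf_{t\to\infty} p_t \ge a/2$ is immediate, and $\limsup_{t\to\infty} p_t \le \frac{1}{2(1-\gamma)}$. To pin the limit down to exactly $a/2$ I would argue by contradiction: suppose $p_t \ge a/2 + \varepsilon$ along a sequence $t\to\infty$ for some $\varepsilon>0$. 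Then on the subinterval $y \in [a, a+2\varepsilon]$ (intersected with the support, which is nonempty near $a$ by definition of $a$) the exponent $ty(p_t - \tfrac12 y) \ge ty\varepsilon' $ for a suitable $\varepsilon'>0$ once $t$ is large, forcing the integral to blow up, contradicting its equality to the constant $x_0$. Combined with $\liminf p_t \ge a/2$ and the fact (from Lemma, $h_{tt}^{(-1)}<0$, (\ref{h_inv 2deriv})) that $h_t^{(-1)}$ is decreasing hence $p_t$ itself converges, we get $\lim p_t = a/2$.

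For part (ii): these are the quantitative refinements, obtained by inserting sharp one-sided estimates of the exponent into the same identity. When $\Delta(x_0,t) < 0$ (i.e. $p_t < a/2$), for every $y$ in the support $y \ge a$ we have $p_t - \tfrac12 y \le p_t - \tfrac12 a = \Delta(x_0,t) < 0$, and also $y \le b = \frac{1}{1-\gamma}$, so $ty(p_t - \tfrac12 y) \le ty\Delta \le ta\Delta$ (since $\Delta<0$ and $y\ge a$); hence $x_0 \le e^{ta\Delta}\mu([a,\frac{1}{1-\gamma}])$, which rearranges to (\ref{delta<0}). When $\Delta(x_0,t) > 0$, restrict the integral to $y \in [a, a+\Delta]$: there $p_t - \tfrac12 y \ge p_t - \tfrac12(a+\Delta) = \tfrac12\Delta > 0$, so $ty(p_t - \tfrac12 y) \ge \tfrac12 ty\Delta \ge \tfrac12 ta\Delta$, giving $x_0 \ge \mu([a,a+\Delta])e^{\frac12 ta\Delta}$, which is (\ref{delta>0}). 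The case $a = 0^{+}$ requires care because the crude bound degenerates: here $p_t = \Delta > 0$ necessarily (the left endpoint is $0$, so $p_t > a/2 = 0$ for all finite $t$ since the integral would otherwise be $\le \mu(\text{supp})$ but actually one checks $p_t>0$), and restricting to $y \in [\Delta, (1+\theta)\Delta]$ we get $p_t - \tfrac12 y \ge \Delta - \tfrac12(1+\theta)\Delta = \tfrac12(1-\theta)\Delta$, so $ty(p_t - \tfrac12 y) \ge t\Delta\cdot\tfrac12(1-\theta)\Delta = \tfrac12 t(1-\theta)\Delta^2$; but we also need the lower factor $y \ge \Delta$, giving the extra $\Delta$, and combining $ty(p_t-\tfrac12 y) \ge \tfrac12 t(1-\theta)(1+\theta)\Delta^2 = \tfrac12 t(1-\theta^2)\Delta^2$ on $[\Delta,(1+\theta)\Delta]$ — wait, more carefully one uses $y\ge\Delta$ only to get the base point and the parabola bound on the interval — yielding (\ref{delta-zero-bound}).

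The main obstacle I anticipate is part (i)'s upper bound, i.e. proving $p_t \not\to$ anything strictly larger than $a/2$: the inequalities from (\ref{h_inv deriv}) only give $\limsup p_t \le \frac{1}{2(1-\gamma)}$, and the classical turnpike toolkit (comparison principles, Feynman–Kac) is unavailable here, so one is forced into the direct blow-up-of-the-integral argument sketched above. Getting that argument clean requires handling the mass of $\mu$ near $a$ carefully — one must use that $a$ is the \emph{infimum} of the support so that $\mu([a,a+\eta])>0$ for every $\eta>0$ — and in the $a=0^{+}$ boundary case the exponent's positive part on $[\Delta,(1+\theta)\Delta]$ has width $\Theta(\Delta)$ and height $\Theta(t\Delta^2)$, so one needs $t\Delta^2 \to 0$ to avoid contradiction, which is precisely the content of (\ref{delta-zero-bound}) and shows $\Delta = o(t^{-1/2})$, consistent with $p_t\to 0$.
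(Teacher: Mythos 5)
Your proposal follows essentially the same route as the paper: both work from the identity $x_{0}=\int_{a}^{b}e^{ty\left( h^{(-1)}(x_{0},t)/t-\frac{1}{2}y\right) }\mu (dy)$, use the bounds and monotonicity of $h_{t}^{(-1)}$ from the preceding lemma for part (i), and obtain part (ii) by localizing the integral onto subintervals near $a$ where the quadratic exponent is controlled; your uniform bound $p_{t}-\frac{1}{2}y\geq \frac{1}{2}\Delta $ on $[a,a+\Delta ]$ is in fact a slightly cleaner derivation of (\ref{delta>0}) than the paper's comparison of the two endpoint values. Two spots need tightening: in the blow-up argument the window $[a,a+2\varepsilon ]$ only yields a nonnegative exponent at its right edge, so it should be shrunk (and for $a=0^{+}$ one must also pass to a subinterval bounded away from $0$ that still carries positive mass); and in (\ref{delta-zero-bound}) the constant $\frac{1}{2}(1-\theta ^{2})$ comes from minimizing the parabola $y\left( \Delta -\frac{1}{2}y\right) $ over $[\Delta ,(1+\theta )\Delta ]$, whose minimum is attained at the right endpoint, rather than from multiplying separate lower bounds on the two factors --- your final inequality is correct, but the "combination" step as written is not a valid deduction. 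Finally, the assertion $\Delta (x_{0},t)>0$ for $a=0^{+}$ holds for $t$ large and follows from $\lim_{t\uparrow \infty }h^{(-1)}(x_{0},t)=\infty $ (itself read off the same identity, as in the paper), not from the remark that the integral would otherwise be bounded by the total mass.
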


\begin{proof}
\textit{i).} Let $x_{0}>0$ fixed. Recall that $h_{t}^{(-1)}(x_{0},t)>0$ (cf.
(\ref{h_inv deriv})) and, thus, $\lim_{t\uparrow \infty }h^{(-1)}(x_{0},t)$
exists. Moreover, rewriting (\ref{x-h})\ as 
\begin{equation}
x_{0}=\int_{a}^{\frac{1}{1-\gamma }}e^{ty\left( \frac{h^{(-1)}(x_{0},t)}{t}-%
\frac{1}{2}y\right) }\mu (dy),  \label{h-inverse-formula}
\end{equation}%
we see that $\lim_{t\uparrow \infty }h^{(-1)}(x_{0},t)=\infty ,$ otherwise,
sending $t\uparrow \infty $ we get a contradiction. In turn, from Lemma 7
and L' Hospital's rule, we deduce that

\begin{equation}
A(x_{0}):=\lim_{t\uparrow \infty }\frac{h^{(-1)}(x_{0},t)}{t}%
=\lim_{t\uparrow \infty }h_{t}^{(-1)}(x_{0},t),  \label{h_inv/t=h_inv_t}
\end{equation}%
and thus 
\begin{equation}
\frac{a}{2}\leq A(x_{0})\leq \frac{1}{2(1-\gamma )}.
\label{bounds on limit of h_inv/t}
\end{equation}

Next, we claim that $A\left( x_{0}\right) <\frac{1}{2\left( 1-\gamma \right) 
}.$

Let $a>0.$ If $a=\frac{1}{1-\gamma },$ then $a=b$ and $h^{\left( -1\right)
}\left( x_{0},t\right) =\ln x_{0}^{1-\gamma }+\frac{1}{2}\frac{1}{1-\gamma }%
t,$ and the result follows directly.

Let $0<a<\frac{1}{1-\gamma }.$ Assume that there exists $x_{0}$ such that $%
A\left( x_{0}\right) =\frac{1}{2\left( 1-\gamma \right) }.$ Then, for $%
\varepsilon >0,$ there exists $t_{0}\left( x_{0},\varepsilon \right) $ such
that, for $t\geq t_{0},$ 
\begin{equation*}
-\varepsilon \leq \frac{h^{(-1)}(x_{0},t)}{t}-\frac{1}{2(1-\gamma )}\leq
\varepsilon .
\end{equation*}%
In turn, for $\delta >0$ small enough, the above and (\ref{x-h})\ yield 
\begin{equation*}
x_{0}\geq \int_{a}^{\left( \frac{1}{1-\gamma }-2\varepsilon -\delta \right)
^{-}}e^{ty\left( \frac{1}{2(1-\gamma )}-\varepsilon -\frac{1}{2}y\right)
}\mu (dy)+\int_{\frac{1}{1-\gamma }-2\varepsilon -\delta }^{\frac{1}{%
1-\gamma }}e^{ty(\frac{1}{2(1-\gamma )}-\varepsilon -\frac{1}{2}y)}\mu (dy),
\end{equation*}%
which yields a contradiction as $t\uparrow \infty ,$ because the first
integral would converge to $\infty .$

Next, assume that there exists $x_{0}>0$ such that 
\begin{equation}
\frac{a}{2}<A(x_{0})<\frac{1}{2(1-\gamma )}.  \label{contradiction}
\end{equation}%
Then, for $\varepsilon ,\delta >0$ small enough we have 
\begin{equation}
a<2(A(x_{0})-\varepsilon )-\delta <2(A(x_{0})-\varepsilon )<\frac{1}{%
1-\gamma }.  \label{inequality-aux}
\end{equation}%
From (\ref{x-h}), we then deduce that, for $t\geq t_{0}(x_{0},\varepsilon )$%
, $x_{0}\geq \int_{a}^{\frac{1}{1-\gamma }}e^{t(y(A(x_{0})-\varepsilon )-%
\frac{1}{2}y^{2})}\mu (dy).$ If $\mu \left( \left \{ a\right \} \right) \neq
0, $ then $x_{0}\geq e^{\frac{ta}{2}\left( 2\left( A\left( x_{0}\right)
-\varepsilon \right) -a\right) }\mu \left( \left \{ a\right \} \right) ,$
and sending $t\uparrow \infty $ yields a contradiction. If $\mu \left(
\left
\{ a\right \} \right) =0,$ then 
\begin{equation}
x_{0}\geq \int_{a}^{\frac{1}{1-\gamma }}e^{t(y(A(x_{0})-\varepsilon )-\frac{1%
}{2}y^{2})}\mu (dy)\geq \int_{a}^{2(A(x_{0})-\varepsilon )-\delta
}e^{t(y(A(x_{0})-\varepsilon )-\frac{1}{2}y^{2})}\mu (dy).  \label{quadratic}
\end{equation}%
Consider the quadratic $B\left( y\right) =y(A(x_{0})-\varepsilon )-\frac{1}{2%
}y^{2}.$ We have 
\begin{equation*}
B\left( y_{1}\right) =B\left( y_{2}\right) =0,\text{ for \ }y_{1}=0\text{ \
and \ }y_{2}=2\left( A(x_{0})-\varepsilon \right) ,
\end{equation*}%
$B\left( y\right) >0,$ for $0<y<2\left( A(x_{0})-\varepsilon \right) ,$ and $%
B\left( y\right) $ achieves a maximum at $y^{\ast }=A(x_{0})-\varepsilon .$

Next, we look at its minimum, $y_{\ast }=\min_{a\leq y\leq
2(A(x)-\varepsilon )-\delta }\Delta \left( y\right) ,$ and claim that 
\begin{equation}
y_{\ast }=2(A(x_{0})-\varepsilon )-\delta .  \label{minimum}
\end{equation}%
Indeed, if $0<a\leq y^{\ast },$ then choosing $\delta $ $<a,$ direct
calculations yield $\Delta \left( a\right) >\Delta \left( y_{\ast }\right) .$
If $y^{\ast }<a,$ then (\ref{inequality-aux}) yields $a<y_{\ast }<y_{2},$
and, thus, the minimum also occurs at $y_{\ast }.$

Clearly, because $y_{1}<y_{\ast }<y_{2},$ we have $B\left( y_{\ast }\right) =%
\frac{1}{2}\delta \left( 2(A(x_{0})-\varepsilon )-\delta \right) >0.$
Therefore, for $t\geq t_{0}(x_{0},\varepsilon )$, 
\begin{equation}
x_{0}\geq \int_{a}^{2(A(x_{0})-\varepsilon )-\delta }e^{tB\left( y_{\ast
}\right) }\mu (dy).  \label{contrad}
\end{equation}%
As $t\uparrow \infty $, the right hand side of (\ref{contrad}) converges to $%
\infty $, unless it holds that $\mu \left( \lbrack a,2(A(x_{0})-\varepsilon
)-\delta ]\right) =0.$ Sending $\delta \downarrow 0$ and $\varepsilon
\downarrow 0$, we then have 
\begin{equation*}
\mu ([a,2A(x_{0})])=0,
\end{equation*}%
which, however, contradicts (\ref{contradiction}). Therefore, it must be
that that, for all $x>0$, $A(x_{0})\leq \frac{a}{2}$, and we easily conclude.

If $a=0^{+},$ similar arguments yield that for every $\theta \in \left(
0,A\left( x_{0}\right) \right] ,$ we have that $\mu (\left[ \theta ,2A(x_{0})%
\right] )=0.$ Sending $\theta \downarrow 0$ yields $\mu \left( 0,2A\left(
x_{0}\right) \right] =0,$ which contradicts (\ref{contradiction}).

\textit{ii). }Let $a>0$.

If $\Delta \left( x_{0},t\right) <0,$ from (\ref{x-h}) we have%
\begin{equation*}
x_{0}=\int_{a}^{\frac{1}{1-\gamma }}e^{ty\left( \Delta \left( x_{0},t\right)
+\frac{1}{2}\left( a-y\right) \right) }\mu (dy)
\end{equation*}%
\begin{equation*}
\leq e^{ta\Delta \left( x_{0},t\right) }\int_{a}^{\frac{1}{1-\gamma }}e^{%
\frac{1}{2}ty\left( a-y\right) }\mu \left( dy\right) \leq e^{ta\Delta \left(
x_{0},t\right) }\mu \left( \left[ a,\frac{1}{1-\gamma }\right] \right) ,
\end{equation*}%
and (\ref{delta<0}) follows.

If $\Delta \left( x_{0},t\right) >0,$ then (\ref{limit-a/2}) yields that,
for $\varepsilon $ small enough and $t\geq t_{0}\left( x_{0},\varepsilon
\right) ,$ $0<\frac{h^{^{\left( -1\right) }}\left( x_{0},t\right) }{t}-\frac{%
a}{2}<\varepsilon .$ Choosing $\varepsilon $ such that $\varepsilon <\frac{1%
}{2\left( 1-\gamma \right) }-\frac{a}{2}$ yields $0<\frac{h^{^{\left(
-1\right) }}\left( x_{0},t\right) }{t}-\frac{a}{2}<\frac{1}{2\left( 1-\gamma
\right) }-\frac{a}{2},$ and using that $a<\frac{1}{1-\gamma },$ gives%
\begin{equation*}
\frac{a}{2}+\frac{h^{^{\left( -1\right) }}\left( x_{0},t\right) }{t}\leq 
\frac{1}{1-\gamma }.
\end{equation*}%
From (\ref{h-inverse}) we then deduce that%
\begin{equation*}
x_{0}\geq \int_{a}^{\frac{a}{2}+\frac{h\left( -1\right) \left(
x_{0},t\right) }{t}}e^{ty\left( \frac{h^{\left( -1\right) }\left(
x_{0},t\right) }{t}-\frac{y}{2}\right) }\mu \left( dy\right) .
\end{equation*}

The quadratic $H\left( y\right) :=y\left( \frac{h^{\left( -1\right) }\left(
x_{0},t\right) }{t}-\frac{y}{2}\right) $ in the above integrand becomes zero
at $y_{1}=0$ and $y_{3}=2\frac{h^{^{\left( -1\right) }}\left( x_{0},t\right) 
}{t}>a$ and, therefore, its minimum occurs at one of the end points $a$ or $%
\frac{a}{2}+\frac{h^{^{\left( -1\right) }}\left( x_{0},t\right) }{t}.$ Note
that $a<\frac{a}{2}+\frac{h^{^{\left( -1\right) }}\left( x_{0},t\right) }{t}%
<y_{3}$.

If it occurs at $a,$ then $H\left( a\right) =a\Delta \left( x_{0},t\right) ,$
while if it occurs at $\frac{a}{2}+\frac{h^{^{\left( -1\right) }}\left(
x_{0},t\right) }{t},$ then $H\left( \frac{a}{2}+\frac{h^{^{\left( -1\right)
}}\left( x_{0},t\right) }{t}\right) =$ $\frac{1}{2}\left( \frac{a}{2}+\frac{%
h^{^{\left( -1\right) }}\left( x_{0},t\right) }{t}\right) \Delta \left(
x_{0},t\right) >\frac{1}{2}a\Delta \left( x_{0},t\right) .$

Combining the above gives 
\begin{equation*}
x_{0}\geq \int_{a}^{\frac{a}{2}+\frac{h^{^{\left( -1\right) }}\left(
x_{0},t\right) }{t}}e^{\frac{1}{2}ta\Delta \left( x_{0},t\right) }\mu \left(
dy\right) =\mu \left( \left[ a,a+\Delta \left( x_{0},t\right) \right]
\right) e^{\frac{1}{2}ta\Delta \left( x_{0},t\right) }.
\end{equation*}

Finally, let $a=0^{+}.$ Then, $\Delta \left( x_{0},t\right) =\frac{h^{\left(
-1\right) }\left( x_{0},t\right) }{t}.$

Recall that $\lim_{t\uparrow \infty }h^{^{\left( -1\right) }}\left(
x_{0},t\right) =\infty ,$ and thus $\frac{h^{^{\left( -1\right) }}\left(
x_{0},t\right) }{t}>0,$ for $t$ large. For $\varepsilon \in \left( \frac{%
h^{^{\left( -1\right) }\left( x_{0},t\right) }}{t},2\frac{h^{^{\left(
-1\right) }\left( x_{0},t\right) }}{t}\right) $ we then have%
\begin{equation*}
x_{0}\geq \int_{\frac{h^{\left( -1\right) }\left( x_{0},t\right) }{t}%
}^{\varepsilon }e^{ty\left( \frac{h^{\left( -1\right) }\left( x_{0},t\right) 
}{t}-\frac{y}{2}\right) }\mu \left( dy\right) \geq \int_{\frac{h^{^{\left(
-1\right) }\left( x_{0},t\right) }}{t}}^{\varepsilon }e^{t\varepsilon \left( 
\frac{h^{\left( -1\right) }\left( x_{0},t\right) }{t}-\frac{\varepsilon }{2}%
\right) }\mu \left( dy\right) .
\end{equation*}%
Setting $\varepsilon =\left( 1+\theta \right) \frac{h^{^{\left( -1\right)
}}\left( x_{0},t\right) }{t},$ (\ref{delta-zero-bound})\ follows.
\end{proof}

\bigskip

We are now ready to prove one of the main results herein.

\begin{theorem}
Let $a$ be the left end of the support of the measure $\mu $. Then, for each 
$x_{0}>0,$ 
\begin{equation}
\lim_{t\uparrow \infty }\frac{r\left( x_{0},t\right) }{x_{0}}=a.
\label{temporal-limit}
\end{equation}%
Furthermore, there exists a function\textbf{\ }$G\left( x_{0},t\right) $
given by 
\begin{equation*}
G\left( x_{0},t\right) :=\left \{ 
\begin{array}{c}
\int_{a}^{\frac{1}{1-\gamma }}(y-a)e^{-ty(\frac{y-a}{2})}\mu (dy),\text{ \ }%
\Delta \left( x_{0},t\right) <0\text{\ } \\ 
\\ 
2\Delta \left( x_{0},t\right) x_{0}+\int_{a+2\Delta \left( x_{0},t\right) }^{%
\frac{1}{1-\gamma }}(y-a)e^{ty(\frac{2\Delta \left( x_{0},t\right) +a-y}{2}%
)}\mu (dy),\text{ }\Delta \left( x_{0},t\right) >0,%
\end{array}%
\right.
\end{equation*}%
satisfying with $\lim_{t\uparrow \infty }G\left( x_{0},t\right) =0$ and, for 
$t$ large enough$,$%
\begin{equation}
0\leq r(x_{0},t)-ax_{0}\leq G\left( x_{0},t\right) .  \label{G-limit}
\end{equation}
\end{theorem}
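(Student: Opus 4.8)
The plan is to combine the key identities for $r(x_0,t)$ and $h^{(-1)}(x_0,t)$ from the excerpt: by (\ref{r-h}) and (\ref{x-h}) we can write
\begin{equation*}
r(x_0,t)-a x_0=\int_a^{\frac{1}{1-\gamma}}(y-a)e^{y h^{(-1)}(x_0,t)-\frac12 y^2 t}\,\mu(dy),
\end{equation*}
and then factor the exponent using $\Delta(x_0,t)=\frac{h^{(-1)}(x_0,t)}{t}-\frac a2$ from Proposition 8(ii), so that
\begin{equation*}
r(x_0,t)-a x_0=\int_a^{\frac{1}{1-\gamma}}(y-a)e^{t y\left(\Delta(x_0,t)+\frac12(a-y)\right)}\,\mu(dy).
\end{equation*}
Since $y\geq a$ on the support, the integrand is nonnegative, which immediately gives the lower bound $r(x_0,t)\geq a x_0$ in (\ref{G-limit}).

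For the upper bound I would split into the two cases of Proposition 8(ii). When $\Delta(x_0,t)<0$, the factor $e^{t y\Delta(x_0,t)}\leq 1$ for $y>0$, so the integrand is bounded above by $(y-a)e^{-t y(\frac{y-a}{2})}$, giving exactly the first branch of $G(x_0,t)$. When $\Delta(x_0,t)>0$, the idea is to separate the part of the mass near $a$ (where $y<a+2\Delta$) from the rest: on $[a,a+2\Delta]$ one bounds $(y-a)\leq 2\Delta$ and $e^{ty(\Delta+\frac12(a-y))}\leq$ (the value of the exponential tied to the identity $x_0=\int e^{ty(\frac{h^{(-1)}}{t}-\frac y2)}\mu(dy)$, i.e. with exponent $ty(\frac a2+\Delta-\frac y2)$), so that integral is $\leq 2\Delta\cdot x_0$; on $[a+2\Delta,\frac{1}{1-\gamma}]$ the exponent $y(\Delta+\frac12(a-y))=y\cdot\frac{2\Delta+a-y}{2}\leq 0$ reproduces the integral in the second branch of $G$. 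Then I would check that $a+2\Delta(x_0,t)\leq\frac{1}{1-\gamma}$ for $t$ large, which is precisely the inequality $\frac a2+\frac{h^{(-1)}(x_0,t)}{t}\leq\frac{1}{1-\gamma}$ already established inside the proof of Proposition 8(ii).

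It remains to show $\lim_{t\uparrow\infty}G(x_0,t)=0$, and this is the main obstacle. The $a=0^+$ case reduces to showing $\Delta(x_0,t)x_0\to 0$, i.e. $h^{(-1)}(x_0,t)/t\to 0$, which is exactly Proposition 8(i); the remaining integral $\int(y-a)e^{ty(\frac{2\Delta+a-y}{2})}\mu(dy)$ has an integrand that tends to $0$ pointwise for $y>a$ (since the exponent is negative and $\to-\infty$) and is dominated by $(y-a)e^{-ty(\frac{y-a}{2})}\leq (y-a)$, which is $\mu$-integrable on the finite support; so dominated convergence gives the limit $0$. For $\Delta<0$ the first branch is handled identically by dominated convergence, using that $\Delta(x_0,t)\to 0$ so that the exponent $-ty\frac{y-a}{2}\to-\infty$ for $y>a$. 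The delicate point is that one must know $\Delta(x_0,t)\to 0$ (from Proposition 8(i)) \emph{and} that the sign of $\Delta$ may oscillate with $t$; I would therefore argue along an arbitrary sequence $t_n\uparrow\infty$, passing to a subsequence on which $\Delta$ has constant sign, apply the relevant branch, and conclude $G(x_0,t_n)\to 0$ in all cases, which yields $\lim_{t\uparrow\infty}r(x_0,t)/x_0=a$ by the sandwich (\ref{G-limit}).
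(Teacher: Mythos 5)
Your proposal is correct and essentially reproduces the paper's own second (rate-of-convergence) proof of this theorem: the same integral representation of $r(x_0,t)-ax_0$, the same nonnegativity argument for the lower bound, the same case split on the sign of $\Delta(x_0,t)$ with splitting point $a+2\Delta(x_0,t)$, the bound $2\Delta(x_0,t)x_0$ obtained from the identity (\ref{x-h}), and dominated convergence for the remaining integral, all resting on Proposition 8 (the subsequence remark about the sign of $\Delta$ is a harmless extra precaution). One minor slip: for $\Delta(x_0,t)>0$ the intermediate majorant $(y-a)e^{-ty(y-a)/2}$ is not a valid bound on the integrand (it would require $\Delta\leq 0$), but your final dominating function $(y-a)$ is correct on $\{y\geq a+2\Delta(x_0,t)\}$ where the exponent is nonpositive, so the dominated convergence step stands; note also that the paper additionally gives an alternative proof of the limit alone via the identity $r_x(x_0,t)=2h_t^{(-1)}(x_0,t)$, which your route does not need.
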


\begin{proof}
We present two alternative convergence proofs. The first yields (\ref%
{temporal-limit}) while the second gives the rate of convergence $G\left(
x_{0},t\right) .$

To this end, differentiating (\ref{exp tran}) gives 
\begin{equation}
u_{xt}(x_{0},t)=\left( \frac{1}{2}-h_{t}^{(-1)}(x_{0},t)\right)
u_{x}(x_{0},t).  \label{u_xt eq}
\end{equation}%
Moreover, (\ref{u pde}) and (\ref{risk-tolerance}) imply that $%
u_{t}(x_{0},t)=-\frac{1}{2}u_{x}(x_{0},t)r(x_{0},t)$ and, in turn, 
\begin{equation}
u_{tx}\left( x_{0},t\right) =-\frac{1}{2}u_{xx}\left( x_{0},t\right) r\left(
x_{0},t\right) -\frac{1}{2}u_{x}\left( x_{0},t\right) r_{x}\left(
x_{0},t\right) .  \label{u_tx eq}
\end{equation}%
Combining the above we deduce 
\begin{equation}
\frac{1}{2}r_{x}(x_{0},t)=h_{t}^{(-1)}(x_{0},t),  \label{r_x=h_inv_t}
\end{equation}%
and from Proposition 8 and (\ref{h_inv/t=h_inv_t})%
\begin{equation}
\lim_{t\uparrow \infty }r_{x}(x_{0},t)=\lim_{t\uparrow \infty
}2h_{t}^{(-1)}(x_{0},t)=a.  \label{t-limit of r_x}
\end{equation}%
On the other hand, 
\begin{equation*}
\lim_{c\downarrow 0^{+}}\int_{c}^{x_{0}}r_{x}(\rho ,t)d\rho
=r(x_{0},t)-\lim_{c\downarrow 0^{+}}r(c,t).
\end{equation*}%
Using the fact that, for all $t\geq 0,$ $\lim_{x\downarrow 0^{+}}r(x,t)=0$
(see \cite{MZ10a}), we get that, for $x_{0}>0$, 
\begin{equation}
r(x_{0},t)=\int_{a}^{x_{0}}r_{x}(\rho ,t)d\rho .
\label{fundamental thm of calculus}
\end{equation}

Finally, we deduce from (\ref{r_x=h_inv_t}) and (\ref{h_inv 2deriv}) that $%
r_{xt}\left( x_{0},t\right) <0,$ and thus, for $x_{0}>0$, we have for $y\in
(0,x_{0}]$, $r_{x}(y,t)\leq r_{x}(x_{0},0).$ However, for all $x_{0}>0,$ $%
r_{x}(x_{0},0)<\infty .$ This follows directly from (\ref{r-h}),(\ref%
{h-representation}) and the full range of $h\left( x,0\right) ,$ since 
\begin{equation*}
r_{x}\left( h\left( z,0\right) ,0\right) =\frac{h_{zz}\left( z,0\right) }{%
h_{z}\left( z,0\right) }=\frac{\int_{a}^{\frac{1}{1-\gamma }}y^{2}e^{yz-%
\frac{1}{2}t^{2}y}\mu \left( dy\right) }{\int_{a}^{\frac{1}{1-\gamma }%
}ye^{yz-\frac{1}{2}t^{2}y}\mu \left( dy\right) }\leq \frac{1}{1-\gamma }.
\end{equation*}%
Using the dominated convergence theorem and passing to the limit as $%
t\uparrow \infty $ in (\ref{r_x=h_inv_t}), we deduce (\ref{temporal-limit}).

Next, we give the second convergence proof, which also yields the rate of
convergence. First note that%
\begin{equation}
0\leq r(x_{0},t)-ax_{0}.  \label{inequality}
\end{equation}%
This follows directly from (\ref{r-h}), (\ref{h-representation}) and (\ref%
{x-h}), for 
\begin{equation*}
r\left( x_{0},t\right) =\int_{a}^{\frac{1}{1-\gamma }}ye^{t(y\frac{%
h^{(-1)}(x_{0},t)}{t}-\frac{1}{2}y^{2})}\mu (dy)\geq a\int_{a}^{\frac{1}{%
1-\gamma }}e^{t(y\frac{h^{(-1)}(x_{0},t)}{t}-\frac{1}{2}y^{2})}\mu (dy).
\end{equation*}%
Furthermore, from (\ref{r-h}), (\ref{h-representation}), (\ref{x-h}) and (%
\ref{delta-difference}), we have 
\begin{equation}
r(x_{0},t)-ax_{0}=\int_{a}^{\frac{1}{1-\gamma }}(y-a)e^{ty(\frac{2\Delta
\left( x_{0},t\right) +a-y}{2})}\mu (dy).  \label{r-integral}
\end{equation}

If $\Delta \left( x_{0},t\right) <0$ (which occurs only if $a>0,$ as shown
in the previous proof), then the above yields 
\begin{equation*}
r(x_{0},t)-ax_{0}\leq \int_{a}^{\frac{1}{1-\gamma }}(y-a)e^{-ty(\frac{y-a}{2}%
)}\mu (dy),
\end{equation*}%
and (\ref{G-limit}) follows directly with $G\left( t\right) :=\int_{a}^{%
\frac{1}{1-\gamma }}(y-a)e^{-ty(\frac{y-a}{2})}\mu (dy).$

Let $\Delta \left( x_{0},t\right) >0$ and $a>0$ or $a=0^{+}.$ If $a=\frac{1}{%
1-\gamma },$ then the result follows trivially.

For $a<\frac{1}{1-\gamma },$ observe that for $t$ large enough, $0<a+2\Delta
\left( x_{0},t\right) <\frac{1}{1-\gamma },$ and thus representation (\ref%
{r-integral}) gives%
\begin{equation*}
r\left( x_{0},t\right) -ax_{0}=\int_{a}^{\left( a+2\Delta \left(
x_{0},t\right) \right) ^{-}}(y-a)e^{ty(\frac{2\Delta \left( x_{0},t\right)
+a-y}{2})}\mu (dy)
\end{equation*}%
\begin{equation*}
+\int_{a+2\Delta \left( x_{0},t\right) }^{\frac{1}{1-\gamma }}(y-a)e^{ty(%
\frac{2\Delta \left( x_{0},t\right) +a-y}{2})}\mu (dy).
\end{equation*}%
Let $C_{1}\left( x_{0},t\right) :=\int_{a}^{\left( a+2\Delta \left(
x_{0},t\right) \right) ^{-}}(y-a)e^{ty(\frac{2\Delta \left( x_{0},t\right)
+a-y}{2})}\mu (dy),$ and observe that 
\begin{equation*}
C_{1}\left( x_{0},t\right) \leq 2\Delta \left( x_{0},t\right)
\int_{a}^{\left( a+2\Delta \left( x_{0},t\right) \right) ^{-}}e^{ty(\frac{%
2\Delta \left( x_{0},t\right) +a-y}{2})}\mu (dy)\leq 2\Delta \left(
x_{0},t\right) x_{0},
\end{equation*}%
where we used (\ref{x-h}). Thus%
\begin{equation}
\lim_{t\uparrow \infty }C_{1}\left( x_{0},t\right) =0.  \label{A}
\end{equation}%
Let also $C_{2}\left( x_{0},t\right) :=\int_{a+2\Delta \left( x_{0},t\right)
}^{\frac{1}{1-\gamma }}(y-a)e^{ty(\frac{2\Delta \left( x_{0},t\right) +a-y}{2%
})}\mu (dy)$ and $F\left( y,t,x_{0}\right) :=\left( y-a\right) e^{ty(\frac{%
2\Delta \left( x_{0},t\right) +a-y}{2})},$ $y\in \left[ a+2\Delta \left(
x_{0},t\right) ,\frac{1}{1-\gamma }\right] .$ Then, $F\left( a+2\Delta
\left( x_{0},t\right) ,t,x_{0}\right) =2\Delta \left( x_{0},t\right) ,$ and
thus $\lim_{t\uparrow \infty }F\left( a+2\Delta \left( x_{0},t\right)
,t,x_{0}\right) =0.$ Furthermore, for each $y\in \left( a+2\Delta \left(
x_{0},t\right) ,\frac{1}{1-\gamma }\right] ,$ we also have $\lim_{t\uparrow
\infty }F\left( y,t,x_{0}\right) =0.$ In turn, the dominated convergence
theorem gives 
\begin{equation}
\lim_{t\uparrow \infty }C_{2}\left( x_{0},t\right) =0.  \label{B}
\end{equation}%
Setting $G\left( x_{0},t\right) :=C_{1}\left( x_{0},t\right) +C_{2}\left(
x_{0},t\right) ,$ and using (\ref{A}) and (\ref{B}), we obtain (\ref{G-limit}%
).
\end{proof}

\section{Spatial and temporal limits for the relative prudence function}

\bigskip We now revert our attention to the relative prudence function $%
p\left( x,t\right) $ defined, for $\left( x,t\right) \in \mathbb{D}_{+},$ as 
\begin{equation}
p\left( x,t\right) =-\frac{xu_{xxx}\left( x,t\right) }{u_{xx}\left(
x,t\right) },  \label{prudence}
\end{equation}%
with $u$ solving (\ref{u pde}).

\begin{proposition}
\textit{For }$\left( x,t\right) \in \mathbb{D}_{+},$ we have that $p\left(
x,t\right) >0.$ Moreover, the following spatial and temporal limits hold.

\textit{i) If Assumption 1 holds, then, for each }$t_{0}\geq 0,$%
\begin{equation}
\lim_{x\uparrow \infty }p(x,t_{0})=2-\gamma .  \label{prudence-space}
\end{equation}

\textit{ii)\ If Assumption 2 holds, then, for each }$x_{0}>0,$%
\begin{equation}
\lim_{t\uparrow \infty }p(x_{0},t)=\left \{ 
\begin{array}{c}
1+\frac{1}{a},\text{ \ if \ }a>0 \\ 
\\ 
\infty ,\text{ if }a=0^{+}.%
\end{array}%
\right.  \label{prudence-time}
\end{equation}
\end{proposition}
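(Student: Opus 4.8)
The plan is to reduce the whole statement to the asymptotics of $r(x,t)/x$ and $r_x(x,t)$ already obtained in Sections 3 and 4, via an elementary algebraic identity. Differentiating $r=-u_x/u_{xx}$ in $x$ gives $r_x=-1+u_xu_{xxx}/u_{xx}^2$; since $u_x/u_{xx}=-r$ and $u_{xxx}/u_{xx}=-p/x$, the product on the right-hand side equals $rp/x$, and solving for $p$ yields the identity
\begin{equation*}
p(x,t)=\frac{x\bigl(1+r_x(x,t)\bigr)}{r(x,t)},\qquad (x,t)\in\mathbb{D}_+.
\end{equation*}
Positivity of $p$ is then immediate, since $x>0$, $r(x,t)>0$ and $r_x(x,t)>0$ by (\ref{r-increasing}); alternatively it follows at once from $u_{xxx}>0$ (cf. (\ref{prudence-aux})) together with $u_{xx}<0$.

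For part i) I would fix $t_0\ge0$ and pass to the limit $x\uparrow\infty$ in this identity. The spatial limit (\ref{x-lim of r(x,t)}) gives $r(x,t_0)/x\to\tfrac1{1-\gamma}$ and its companion (\ref{x-limit-derivative}) gives $r_x(x,t_0)\to\tfrac1{1-\gamma}$, so
\begin{equation*}
\lim_{x\uparrow\infty}p(x,t_0)=\lim_{x\uparrow\infty}\frac{1+r_x(x,t_0)}{r(x,t_0)/x}=(1-\gamma)\Bigl(1+\tfrac1{1-\gamma}\Bigr)=2-\gamma.
\end{equation*}

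For part ii) I would fix $x_0>0$ and let $t\uparrow\infty$. The temporal turnpike limit (\ref{temporal-limit}) gives $r(x_0,t)/x_0\to a$, and (\ref{t-limit of r_x}) gives $r_x(x_0,t)\to a$. When $a>0$ these combine in the identity to give $\lim_{t\uparrow\infty}p(x_0,t)=(1+a)/a=1+\tfrac1a$. When $a=0^+$, the numerator $1+r_x(x_0,t)\to1$ while the denominator $r(x_0,t)/x_0\to0^+$, hence $p(x_0,t)\to+\infty$.

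I do not anticipate a genuine obstacle: once the identity $p=x(1+r_x)/r$ is established, all three assertions are direct consequences of results already proved, the only mild point being that the limits invoked are taken at the same $t_0$ (resp.\ the same $x_0$), so no uniformity in the other variable is needed. If one preferred to avoid the identity, one could instead start from $p\bigl(h(z,t),t\bigr)=h(z,t)\bigl(h_z(z,t)+h_{zz}(z,t)\bigr)/h_z(z,t)^2$ with $z=h^{(-1)}(x,t)$ and repeat the dominated convergence arguments of Sections 3 and 4 --- concentration of $h$, $h_z$, $h_{zz}$ at the mass point $b=\tfrac1{1-\gamma}$ for the spatial limit, and at the left endpoint $a$ (using $h^{(-1)}(x_0,t)/t\to a/2$) for the temporal one --- but that route is strictly more laborious.
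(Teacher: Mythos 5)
Your proposal is correct and follows essentially the same route as the paper: the paper also derives the identity $p(x,t)=\frac{x}{r(x,t)}\left(1+r_{x}(x,t)\right)$, gets positivity from $r_{x}>0$ (cf. (\ref{r-increasing})), and obtains the spatial limit from (\ref{x-lim of r(x,t)}) and (\ref{x-limit-derivative}). For the temporal limit the paper writes the equivalent relation $\frac{r(x_{0},t)}{x_{0}}\,p(x_{0},t)=1+2h_{t}^{(-1)}(x_{0},t)$ and uses $h_{t}^{(-1)}(x_{0},t)\rightarrow \frac{a}{2}$, which is exactly your use of $r_{x}(x_{0},t)\rightarrow a$ via (\ref{r_x=h_inv_t}), so the two arguments coincide.
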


\begin{proof}
Using (\ref{prudence}) and (\ref{risk-tolerance}), we deduce that, for each $%
t_{0}\geq 0,$ 
\begin{equation*}
p\left( x,t_{0}\right) =\frac{x}{r\left( x,t_{0}\right) }\left(
1+r_{x}\left( x,t_{0}\right) \right) ,
\end{equation*}%
and the fact that $p\left( x,t_{0}\right) >0$ and (\ref{prudence-space})
follow directly from (\ref{r-increasing})\ and (\ref{x-lim of r(x,t)}),
respectively.

From (\ref{prudence}) and equation (\ref{u pde}) we also obtain that, for
each $x_{0}>0,$ 
\begin{equation}
\frac{u_{xt}\left( x_{0},t\right) }{u_{x}\left( x_{0},t\right) }=1-\frac{1}{2%
}\frac{r\left( x_{0},t\right) }{x_{0}}p\left( x_{0},t\right) =\frac{1}{2}%
-h_{t}^{(-1)}(x_{0},t).  \label{prudence-0}
\end{equation}%
Using that $\lim_{t\uparrow \infty }h_{t}^{(-1)}(x_{0},t)=\frac{a}{2}$ we
easily conclude.
\end{proof}

\section{Examples}

We present two representative examples in which the measure is,
respectively, a sum of Dirac functions and the Lebesgue measure. The first
example generalizes the results of the example in subsection 2.1, while the
second demonstrates that the spatial turnpike property fails if there is no
mass at the right end of the measure's support.

\subsection{Finite sum of Dirac functions}

We assume that 
\begin{equation*}
\mu =\dsum \limits_{n=1}^{N}\delta _{y_{n}},\text{ \  \ with \  \ }%
0<y_{1}<\dots <y_{N}=\frac{1}{1-\gamma }.
\end{equation*}%
Then, $h(z,0)=\Sigma _{n=1}^{N}e^{y_{n}z}$ and, thus, $\lim_{z\uparrow
\infty }h\left( z,0\right) e^{-zy_{N}}=1.$ In turn, (\ref{u-x-h})\ yields 
\begin{equation*}
\lim_{x\uparrow \infty }\frac{u_{x}\left( x,0\right) }{x^{\gamma -1}}=1,
\end{equation*}%
which verifies the results of Lemma 2. We also have, for $\left( z,t\right)
\in \mathbb{D},$ 
\begin{equation*}
h(z,t)=\sum_{n=1}^{N}\exp \left( y_{n}z-\frac{1}{2}y_{n}^{2}t\right) .
\end{equation*}%
(cf. (\ref{h-representation})), and, therefore, for $x>0,$ 
\begin{equation}
x=\sum_{n=1}^{N}\exp \left( y_{n}t\left( \frac{h^{(-1)}(x,t)}{t}-\frac{1}{2}%
y_{n}\right) \right) .  \label{h-inverse-Dirac}
\end{equation}%
Furthermore, 
\begin{equation}
h^{(-1)}(x,t)-\frac{1}{2}y_{1}t\leq \frac{1}{y_{1}}\ln x.  \label{extra}
\end{equation}

\subsubsection{Temporal asymptotic expansion of $h^{(-1)}(x_{0},t)$ for
large $t$}

We claim that, for each $x_{0}>0,$ as $t\uparrow \infty $, 
\begin{equation}
h^{(-1)}(x_{0},t)=\frac{1}{2}y_{1}t+\frac{1}{y_{1}}\ln x_{0}+o(1).
\label{Dirac-temporal}
\end{equation}%
Indeed, using the limit (\ref{limit-a/2}), we have 
\begin{equation*}
\lim_{t\uparrow \infty }\left( \frac{h^{(-1)}(x_{0},t)}{t}-\frac{1}{2}%
y_{n}\right) 
\begin{cases}
<0,\text{ \  \ } & \text{ }1<n\leq N \\ 
=0,\text{ \  \  \ } & \text{ }n=1.%
\end{cases}%
\end{equation*}%
Therefore, as $t\uparrow \infty $, all the terms in (\ref{h-inverse-Dirac})
vanish except for the first one, and thus, 
\begin{equation}
x_{0}=\lim_{t\uparrow \infty }\exp \left( y_{1}h^{(-1)}(x_{0},t)-\frac{1}{2}%
y_{1}^{2}t\right) .  \label{vanish}
\end{equation}%
Taking logarithm and rearranging terms yields (\ref{Dirac-temporal}).

\subsubsection{Spatial asymptotic expansion of $h^{(-1)}(x,t_{0})$ for large 
$x$}

We claim that, for each $t_{0}\geq 0,$ 
\begin{equation}
h^{(-1)}(x,t_{0})=\left( 1-\gamma \right) \ln x+\frac{1}{2\left( 1-\gamma
\right) }t_{0}+o(1).  \label{Dh_inv x-expansion}
\end{equation}%
To obtain this, we first establish that%
\begin{equation}
\lim_{x\uparrow \infty }\frac{h^{(-1)}(x,t_{0})}{\ln x}=\left( 1-\gamma
\right) .  \label{Dh_inv x-lim}
\end{equation}%
Indeed, fix $t_{0}\geq 0,$ let $\delta \in (0,\frac{1}{1-\gamma })$ and
assume that 
\begin{equation*}
\liminf_{x\uparrow \infty }\frac{h^{(-1)}(x,t_{0})}{\ln x}<\frac{1}{\frac{1}{%
1-\gamma }+\delta }.
\end{equation*}%
Then, using (\ref{h-inverse-Dirac}) and that $h^{(-1)}(x,t_{0})>0,$ for
large $x$, we have 
\begin{equation*}
1=\frac{1}{x}\sum_{n=1}^{N}\exp \left( y_{n}\ln x\left( \frac{%
h^{(-1)}(x,t_{0})}{\ln x}\right) -\frac{1}{2}y_{n}^{2}t_{0}\right)
\end{equation*}%
\begin{equation*}
\leq \frac{1}{x}\sum_{n=1}^{N}\exp \left( y_{n}\ln x\left( \frac{%
h^{(-1)}(x,t_{0})}{\ln x}\right) \right) \leq Nx^{\frac{1}{1-\gamma }\frac{%
h^{(-1)}(x,t_{0})}{\ln x}-1},
\end{equation*}%
and using that $\frac{1}{1-\gamma }\frac{1}{\frac{1}{1-\gamma }+\delta }-1=-%
\frac{\delta \left( 1-\gamma \right) }{1+\delta \left( 1-\gamma \right) }<0,$
we get a contradiction as $x\uparrow \infty $.

Since $\delta $ is arbitrary, we deduce that 
\begin{equation}
\liminf_{x\uparrow \infty }\frac{h^{(-1)}(x,t_{0})}{\ln x}\geq \left(
1-\gamma \right) .  \label{Dh_inv liminf}
\end{equation}%
Similarly, assume that for $\delta \in \left( 0,\frac{1}{1-\gamma }\right) ,$
\begin{equation*}
\limsup_{x\uparrow \infty }\frac{h^{(-1)}(x,t_{0})}{\ln x}>\frac{1}{\frac{1}{%
1-\gamma }-\delta }.
\end{equation*}%
Then, (\ref{extra})\ gives%
\begin{equation*}
1>\frac{1}{x}\exp \left( \frac{1}{1-\gamma }\ln x\frac{h^{(-1)}(x,t_{0})}{%
\ln x}-\frac{1}{2}\left( \frac{1}{1-\gamma }\right) ^{2}t_{0}\right)
\end{equation*}%
\begin{equation*}
=x^{\frac{1}{1-\gamma }\frac{h^{(-1)}(x,t_{0})}{\ln x}-1}e^{-\frac{1}{2}%
\left( \frac{1}{1-\gamma }\right) ^{2}t_{0}}
\end{equation*}%
and using that $\frac{1}{1-\gamma }\frac{1}{\frac{1}{1-\gamma }-\delta }-1=%
\frac{\delta \left( 1-\gamma \right) }{1-\delta \left( 1-\gamma \right) }>0,$
we get a contradiction as $x\uparrow \infty .$ Since $\delta $ is arbitrary,
we deduce that 
\begin{equation}
\limsup_{x\uparrow \infty }\frac{h^{(-1)}(x,t_{0})}{\ln x}\leq \left(
1-\gamma \right) ,  \label{Dh_inv limsup}
\end{equation}%
and we easily conclude.

Next, we rewrite (\ref{h-inverse-Dirac}) as%
\begin{equation}
1=\sum_{n=1}^{N}\exp \left( y_{n}h^{(-1)}(x,t_{0})-\frac{1}{2}%
y_{n}^{2}t_{0}-\ln x\right)  \label{rew}
\end{equation}%
\begin{equation*}
=\sum_{n=1}^{N}\exp \left( y_{n}\ln x\left( \frac{h^{(-1)}(x,t_{0})}{\ln x}-%
\frac{1}{y_{n}}\right) -\frac{1}{2}y_{n}^{2}t_{0}\right) .
\end{equation*}%
Note that from the limit in (\ref{Dh_inv x-lim}) we have that 
\begin{equation*}
\lim_{x\uparrow \infty }\left( \frac{h^{(-1)}(x,t_{0})}{\ln x}-\frac{1}{y_{n}%
}\right) =%
\begin{cases}
<0, & \text{ }1\leq n<N \\ 
=0, & \text{ }n=N.%
\end{cases}%
\end{equation*}%
Therefore, as $x\uparrow \infty $, the first $N-1$ terms in (\ref{rew})
vanish, and we deduce that 
\begin{equation*}
\lim_{x\uparrow \infty }\exp \left( \frac{1}{1-\gamma }h^{(-1)}(x,t_{0})-\ln
x-\frac{1}{2}\left( \frac{1}{1-\gamma }\right) ^{2}t_{0}\right) =1.
\end{equation*}%
We then obtain (\ref{Dh_inv x-expansion}) by taking the logarithm and
rearranging the terms.

\subsubsection{Spatial and temporal asymptotics of $r(x,t)$}

From representation (\ref{r-h}), we have for the risk tolerance function 
\begin{equation}
r(x,t)=\sum_{n=1}^{N}y_{n}\exp \left( y_{n}h^{(-1)}(x,t)-\frac{1}{2}%
y_{n}^{2}t\right) .  \label{Dr}
\end{equation}%
Let $x_{0}>0$. Then, (\ref{extra}) gives%
\begin{equation*}
r(x_{0},t)\leq \sum_{n=1}^{N}y_{n}\exp \left( y_{n}(\frac{1}{2}y_{1}t+\frac{1%
}{y_{1}}\ln x_{0})-\frac{1}{2}y_{n}^{2}t\right)
\end{equation*}%
\begin{equation*}
=y_{1}x_{0}+\sum_{n=2}^{N}y_{n}\exp \left( \frac{1}{2}y_{n}(y_{1}-y_{n})t%
\right) x_{0}^{\frac{y_{n}}{y_{1}}}.
\end{equation*}%
Therefore, the temporal asymptotic expansion of $r(x_{0},t)$ as $t\uparrow
\infty $ is given by 
\begin{equation}
r(x_{0},t)=y_{1}x_{0}+O\left( e^{\frac{1}{2}y_{2}(y_{1}-y_{2})t}\right) .
\label{r-temporal-Dirac}
\end{equation}%
Next, let $t_{0}\geq 0$. Then,%
\begin{equation*}
\lim_{x\uparrow \infty }r(x,t_{0})=\lim_{x\uparrow \infty
}\sum_{n=1}^{N}y_{n}\exp \left( y_{n}(\left( 1-\gamma \right) \ln x+\frac{1}{%
2\left( 1-\gamma \right) }t_{0})-\frac{1}{2}y_{n}^{2}t_{0}\right) ,
\end{equation*}%
and, thus, as $x\uparrow \infty ,$ 
\begin{equation}
r(x,t_{0})=\sum_{n=1}^{N}y_{n}\exp \left( \frac{1}{2}y_{n}t_{0}(\frac{1}{%
1-\gamma }-y_{n})\right) x^{\left( 1-\gamma \right) y_{n}}+o(1).
\label{r-spatial-Dirac}
\end{equation}%
Therefore, for each $x_{0}>0$ and $t_{0}\geq 0,$ we have the temporal
asymptotic expansion (\ref{r-temporal-Dirac}) yields 
\begin{equation*}
\lim_{t\uparrow \infty }\frac{r(x_{0},t)}{x_{0}}=y_{1}\text{ \  \ and \  \ }%
\lim_{x\uparrow \infty }\frac{r\left( x,t_{0}\right) }{x}=y_{N}=\frac{1}{%
1-\gamma },
\end{equation*}%
and these limits are consistent with the findings in Proposition 3 and
Theorem 9, respectively.

\subsection{Lebesgue measure}

We consider a case of a measure with continuous support but without a mass
at its right boundary. We derive the associated limits and also show that
the spatial turnpike property fails.

\begin{itemize}
\item Lebesgue measure on $\left[ a,\frac{1}{1-\gamma }\right] ,$ $a>0$
\end{itemize}

Consider the functions $\varphi (z):=e^{-\frac{z^{2}}{2}}~~$and$~~\Phi
(z):=\int_{-\infty }^{z}\varphi (y)dy,$ for $z\in \mathbb{R}.$ Then,
representations (\ref{h-representation})\ and (\ref{x-h}) yield,
respectively, 
\begin{equation}
h(z,t)=\int_{a}^{\frac{1}{1-\gamma }}e^{yz-\frac{1}{2}y^{2}t}dy=\frac{%
e^{z^{2}/2t}}{\sqrt{t}}\int_{a\sqrt{t}-z/\sqrt{t}}^{\frac{1}{1-\gamma }\sqrt{%
t}-z/\sqrt{t}}\varphi (y)dy,  \label{h1}
\end{equation}%
and 
\begin{equation}
x=\int_{a}^{\frac{1}{1-\gamma }}e^{yt\left( \frac{h^{(-1)}(x,t)}{t}-\frac{1}{%
2}y\right) }dy=\frac{1}{\sqrt{t}}e^{\frac{h^{(-1)}(x,t)^{2}}{2t}}\int_{a%
\sqrt{t}-\frac{h^{(-1)}(x,t)}{\sqrt{t}}}^{\frac{1}{1-\gamma }\sqrt{t}-\frac{%
h^{(-1)}(x,t)}{\sqrt{t}}}\varphi (y)dy.  \label{h_inv 2}
\end{equation}

\subsubsection{Temporal asymptotic expansion of $h^{(-1)}(x_{0},t)$ for
large $t$}

We claim that for $x_{0}>0,$ as $t\uparrow \infty ,$ 
\begin{equation}
h^{(-1)}(x_{0},t)=\frac{1}{2}at+\frac{1}{a}\left( \ln t+\ln x_{0}+\ln \frac{a%
}{2}\right) +o(1).  \label{t-expansion1}
\end{equation}%
To show this, we first establish that 
\begin{equation}
x_{0}=\lim_{t\uparrow \infty }\frac{e^{a(h^{(-1)}(x_{0},t)-\frac{1}{2}at)}}{%
\frac{1}{2}at}.  \label{intermediate t-limit}
\end{equation}%
Using (\ref{h_inv 2}) and that, for $z<0,$ 
\begin{equation}
\Phi (z)\leq -\frac{\varphi (z)}{z},  \label{aux1}
\end{equation}%
we have, for $t$ large enough, 
\begin{equation*}
x_{0}\leq \frac{1}{\sqrt{t}}\exp \left( \frac{h^{(-1)}(x_{0},t)^{2}}{2t}%
\right) \Phi \left( -a\sqrt{t}+\frac{h^{(-1)}(x_{0},t)}{\sqrt{t}}\right)
\end{equation*}%
\begin{equation*}
\leq \frac{1}{\sqrt{t}}\frac{1}{a\sqrt{t}-\frac{h^{(-1)}(x_{0},t)}{\sqrt{t}}}%
\exp \left( \frac{h^{(-1)}(x_{0},t)^{2}}{2t}\right) \varphi \left( -a\sqrt{t}%
+\frac{h^{(-1)}(x_{0},t)}{\sqrt{t}}\right)
\end{equation*}%
\begin{equation*}
=\frac{e^{a(h^{(-1)}(x_{0},t)-\frac{1}{2}at)}}{at-h^{(-1)}(x_{0},t)}.
\end{equation*}%
In turn, 
\begin{equation}
x_{0}\leq \liminf_{t\uparrow \infty }\frac{e^{a(h^{(-1)}(x_{0},t)-\frac{1}{2}%
at)}}{at-h^{(-1)}(x_{0},t)}.  \label{inf2}
\end{equation}%
Next, we show that 
\begin{equation*}
x_{0}\geq \limsup_{t\uparrow \infty }\frac{e^{a(h^{(-1)}(x_{0},t)-\frac{1}{2}%
at)}}{at-h^{(-1)}(x_{0},t)},
\end{equation*}%
which with (\ref{inf2})\ will yield (\ref{intermediate t-limit}). To this
end, we use that for any $b>a>0,$ the inequality 
\begin{equation*}
\Phi (b)-\Phi (a)\geq \frac{1}{b}\left( \varphi (a)-\varphi (b)\right) \ 
\end{equation*}%
holds. Let $1<k<\frac{1}{a(1-\gamma )}$. From (\ref{h_inv 2}) and the above,
we have, for $t$ large enough, that

\begin{equation*}
x_{0}\geq \frac{1}{\sqrt{t}}e^{\frac{h^{(-1)}(x_{0},t)^{2}}{2t}}\left( \Phi
\left( ka\sqrt{t}-\frac{h^{(-1)}(x_{0},t)}{\sqrt{t}}\right) -\Phi \left( a%
\sqrt{t}-\frac{h^{(-1)}(x_{0},t)}{\sqrt{t}}\right) \right)
\end{equation*}%
\begin{equation*}
\geq \frac{1}{\sqrt{t}}\frac{1}{ka\sqrt{t}-\frac{h^{(-1)}(x_{0},t)}{\sqrt{t}}%
}e^{\frac{h^{(-1)}(x_{0},t)^{2}}{2t}}
\end{equation*}%
\begin{equation*}
\times \left( \varphi \left( a\sqrt{t}-\frac{h^{(-1)}(x_{0},t)}{\sqrt{t}}%
\right) -\varphi \left( ka\sqrt{t}-\frac{h^{(-1)}(x_{0},t)}{\sqrt{t}}\right)
\right)
\end{equation*}%
\begin{equation*}
=\frac{1}{kat-h^{(-1)}(x_{0},t)}\left( e^{a(h^{(-1)}(x_{0},t)-\frac{1}{2}%
at)}-e^{ka(h^{(-1)}(x_{0},t)-\frac{1}{2}kat)}\right) .
\end{equation*}%
From Proposition 8 and since $k>1$, we have 
\begin{equation*}
\lim_{t\uparrow \infty }\frac{e^{ka(h^{(-1)}(x_{0},t)-\frac{1}{2}kat)}}{%
kat-h^{(-1)}(x_{0},t)}=\lim_{t\uparrow \infty }\frac{{e^{ka^{2}t(\frac{%
h^{(-1)}(x_{0},t)}{at}-\frac{k}{2})}}}{{at\left( k-\frac{h^{(-1)}(x_{0},t)}{%
at}\right) }}=0.
\end{equation*}%
Therefore, 
\begin{equation*}
x_{0}\geq \limsup_{t\uparrow \infty }\frac{1}{kat-h^{(-1)}(x_{0},t)}\left(
e^{a(h^{(-1)}(x_{0},t)-\frac{1}{2}at)}-e^{ka(h^{(-1)}(x_{0},t)-\frac{1}{2}%
kat)}\right)
\end{equation*}%
\begin{equation*}
\geq \limsup_{t\uparrow \infty }\frac{e^{ka(h^{(-1)}(x_{0},t)-\frac{1}{2}%
kat)}}{kat-h^{(-1)}(x_{0},t)}-\lim_{t\uparrow \infty }\frac{%
e^{ka(h^{(-1)}(x_{0},t)-\frac{1}{2}kat)}}{kat-h^{(-1)}(x_{0},t)}%
=\limsup_{t\uparrow \infty }\frac{e^{a(h^{(-1)}(x_{0},t)-\frac{1}{2}at)}}{%
kat-h^{(-1)}(x_{0},t)},
\end{equation*}%
and sending $k\downarrow 1$ we conclude.

Next, we utilize the Lambert-W function $W(x)$, defined as the inverse
function of $F(x)=xe^{x}$, to derive the explicit asymptotic expansion of $%
h^{(-1)}(x_{0},t)$ as $t\uparrow \infty $. Recalling the notation $\Delta
\left( x_{0},t\right) =h^{(-1)}(x_{0},t)-\frac{1}{2}at$, we deduce from %
\eqref{intermediate t-limit} that there exists $\varepsilon (t)$ with $%
\lim_{t\uparrow \infty }\varepsilon (t)=0$, such that 
\begin{equation*}
\frac{e^{a\Delta \left( x_{0},t\right) }}{\frac{1}{2}at-\Delta \left(
x_{0},t\right) }=x_{0}(1+\varepsilon (t)).
\end{equation*}%
Rewriting it yields 
\begin{equation*}
a\left( \frac{1}{2}at-\Delta \left( x_{0},t\right) \right) e^{a(\frac{1}{2}%
at-\Delta \left( x_{0},t\right) )}=\frac{a}{x_{0}(1+\varepsilon (t))}e^{%
\frac{1}{2}a^{2}t},
\end{equation*}%
Using that the left hand side is of the form $F(a(\frac{1}{2}at-\Delta
\left( x_{0},t\right) ))$, we obtain 
\begin{equation*}
a(\frac{1}{2}at-\Delta \left( x_{0},t\right) )=W\left( \frac{a}{%
x_{0}(1+\varepsilon (t))}e^{\frac{1}{2}a^{2}t}\right) ,
\end{equation*}%
and, in turn, 
\begin{equation*}
\Delta \left( x_{0},t\right) =\frac{1}{2}at-\frac{1}{a}W\left( \frac{a}{%
x_{0}(1+\varepsilon (t))}e^{\frac{1}{2}a^{2}t}\right) .
\end{equation*}%
It is established in \cite{corless} that the asymptotic expansion of $W(x),$
for large $x,$ is given by 
\begin{equation*}
W(x)=\ln x-\ln (\ln x)+o(1).
\end{equation*}%
Therefore, 
\begin{equation*}
\Delta \left( x_{0},t\right) =\frac{1}{2}at-\frac{1}{a}\ln \left( \frac{a}{%
x_{0}(1+\varepsilon (t))}e^{\frac{1}{2}a^{2}t}\right) +\frac{1}{a}\ln \ln
\left( \frac{a}{x_{0}(1+\varepsilon (t))}e^{\frac{1}{2}a^{2}t}\right) +o(1)
\end{equation*}%
\begin{equation*}
=\frac{1}{a}\left( \ln \frac{x_{0}}{a}+\ln (1+\varepsilon (t))+\ln \left( 
\frac{1}{2}a^{2}t+\ln \frac{a}{x_{0}(1+\varepsilon (t))}\right) \right)
+o(1).
\end{equation*}%
Using that as $t\uparrow \infty $, $\ln (1+\varepsilon (t))=o(1)$ and that 
\begin{equation*}
\ln \left( \frac{1}{2}a^{2}t+\ln \frac{a}{x_{0}(1+\varepsilon (t))}\right)
=\ln \left( \frac{1}{2}a^{2}t\right) +o(1),
\end{equation*}%
assertion \eqref{t-expansion1} follows.

\subsubsection{Spatial asymptotic expansion of $h^{(-1)}(x,t_{0})$ for large 
$x$}

Let $t_{0}\geq 0.$ We show that, as $x\uparrow \infty ,$ 
\begin{equation}
h^{(-1)}(x,t_{0})=\frac{1}{2(1-\gamma )}t_{0}+\left( 1-\gamma \right) \left(
\ln x+\ln \ln x-\ln \frac{1}{1-\gamma }\right) +o(1).  \label{x-expansion1}
\end{equation}%
We first establish that 
\begin{equation}
\lim_{x\uparrow \infty }\frac{h^{(-1)}(x,t_{0})}{\ln x}=\left( 1-\gamma
\right) .  \label{xlim of h_inv}
\end{equation}%
Indeed, let $f(z,t):=\frac{1}{z}e^{\frac{1}{1-\gamma }z-\frac{1}{2}\left( 
\frac{1}{1-\gamma }\right) ^{2}t}$. Then,%
\begin{equation*}
\lim_{z\uparrow \infty }\frac{h(z,t_{0})}{f(z,t_{0})}=\lim_{z\uparrow \infty
}\int_{a}^{\frac{1}{1-\gamma }}ze^{z(y-\frac{1}{1-\gamma })-\frac{1}{2}%
(y^{2}-\left( \frac{1}{1-\gamma }\right) ^{2})t_{0}}dy
\end{equation*}%
\begin{equation*}
=\lim_{z\uparrow \infty }\left( \int_{a}^{\frac{1}{1-\gamma }%
}(z-yt_{0})e^{z(y-\frac{1}{1-\gamma })-\frac{1}{2}(y^{2}-\left( \frac{1}{%
1-\gamma }\right) ^{2})t_{0}}dy+\int_{a}^{\frac{1}{1-\gamma }}yt_{0}e^{z(y-%
\frac{1}{1-\gamma })-\frac{1}{2}(y^{2}-\left( \frac{1}{1-\gamma }\right)
^{2})t_{0}}dy\right)
\end{equation*}%
\begin{equation*}
=\lim_{z\uparrow \infty }\left( 1-e^{(a-\frac{1}{1-\gamma })z-\frac{1}{2}%
(a^{2}-\left( \frac{1}{1-\gamma }\right) ^{2})t_{0}}+\int_{a}^{\frac{1}{%
1-\gamma }}yt_{0}e^{z(y-\frac{1}{1-\gamma })-\frac{1}{2}(y^{2}-\left( \frac{1%
}{1-\gamma }\right) ^{2})t_{0}}dy\right) =1,
\end{equation*}%
where we used that $a<\frac{1}{1-\gamma }$ and, for the third term, the
monotone convergence theorem. Therefore, for each $t_{0}\geq 0,$ 
\begin{equation}
\lim_{x\uparrow \infty }\frac{h(x,t_{0})}{f\left( x,t_{0}\right) }=1.
\label{a-1}
\end{equation}

We now use a result on the inverses of asymptotic functions (see \cite%
{Entringer}) to prove the limit in (\ref{xlim of h_inv}) by verifying the
necessary assumptions for this result to hold. To this end, consider the
function $g(z):=\left( 1-\gamma \right) \ln z$, and notice that 
\begin{equation*}
g(f(z,t_{0}))=-\left( 1-\gamma \right) \ln z+z-\frac{1}{2\left( 1-\gamma
\right) }t_{0}\sim z,~~\  \text{as }~z\uparrow \infty .
\end{equation*}%
Thus, $\lim_{z\uparrow \infty }z^{-1}g(f(z,t_{0}))=1.$ Since, on the other
hand, $\lim_{z\uparrow \infty }f(z,t_{0})=\infty $, we deduce that $%
f^{(-1)}(x,t_{0})\sim g(x)$, as $x\uparrow \infty $. Moreover, $g(x)$ is
strictly increasing and the ratio $\frac{g_{x}\left( x,t_{0}\right) }{%
g\left( x,t_{0}\right) }\sim \frac{1}{x\ln x}=O(\frac{1}{x}),$ for
sufficiently large $x$. It then follows from the aforementioned result that $%
g(x)\sim h^{(-1)}(x,t_{0}),$ as $x\uparrow \infty ,$ and (\ref{xlim of h_inv}%
) follows.\newline

Next, we claim that, for each $t_{0}\geq 0$, 
\begin{equation}
\lim_{x\uparrow \infty }\frac{e^{\frac{1}{1-\gamma }(h^{(-1)}(x,t_{0})-\frac{%
1}{2}\frac{1}{1-\gamma }t_{0})}}{x\ln x}=1-\gamma .  \label{xlim}
\end{equation}%
Indeed, for $t_{0}=0$, we have from (\ref{h_inv 2}) that 
\begin{equation}
x=\int_{a}^{\frac{1}{1-\gamma }}e^{yh^{(-1)}(x,0)}dy=\frac{1}{h^{(-1)}(x,0)}%
\left( e^{\frac{1}{1-\gamma }h^{(-1)}(x,0)}-e^{ah^{(-1)}(x,0)}\right) ,
\label{A(x,0)}
\end{equation}%
and (\ref{xlim of h_inv}) yields that 
\begin{equation*}
\lim_{x\uparrow \infty }\frac{e^{\frac{1}{1-\gamma }h^{(-1)}(x,0)}}{x\ln x}
\end{equation*}%
\begin{equation*}
=\lim_{x\uparrow \infty }\frac{e^{\frac{1}{1-\gamma }h^{(-1)}(x,0)}}{e^{%
\frac{1}{1-\gamma }h^{(-1)}(x,0)}-e^{ah^{(-1)}(x,0)}}\frac{h^{(-1)}(x,0)}{%
\ln x}=1-\gamma .
\end{equation*}%
For $t_{0}>0$, we deduce from (\ref{h_inv 2})\ that 
\begin{equation}
x=\frac{1}{\sqrt{t_{0}}}e^{-\frac{h^{(-1)}(x,t_{0})^{2}}{2t_{0}}}\left( \Phi
\left( \frac{1}{1-\gamma }\sqrt{t_{0}}-\frac{h^{(-1)}(x,t_{0})}{\sqrt{t_{0}}}%
\right) -\Phi \left( a\sqrt{t_{0}}-\frac{h^{(-1)}(x,t_{0})}{\sqrt{t_{0}}}%
\right) \right) .  \label{x-long}
\end{equation}%
Then, using (\ref{aux1}), we have, for large $x,$

\begin{equation*}
1\leq \frac{1}{x\sqrt{t_{0}}}\exp \left( \frac{h^{(-1)}(x,t_{0})^{2}}{2t_{0}}%
\right) \Phi \left( \frac{1}{1-\gamma }\sqrt{t_{0}}-\frac{h^{(-1)}(x,t_{0})}{%
\sqrt{t_{0}}}\right)
\end{equation*}%
\begin{equation*}
\leq \frac{1}{x\sqrt{t_{0}}}e^{\frac{h^{(-1)}(x,t_{0})^{2}}{2t_{0}}}\frac{1}{%
\frac{h^{(-1)}(x,t_{0})}{\sqrt{t_{0}}}-\frac{1}{1-\gamma }\sqrt{t_{0}}}%
\varphi \left( \frac{1}{1-\gamma }\sqrt{t_{0}}-\frac{h^{(-1)}(x,t_{0})}{%
\sqrt{t_{0}}}\right)
\end{equation*}%
\begin{equation*}
=\frac{e^{\frac{1}{1-\gamma }(h^{(-1)}(x,t_{0})-\frac{1}{2}\frac{1}{1-\gamma 
}t_{0})}}{x(h^{(-1)}(x,t_{0})-\frac{1}{1-\gamma }t_{0})},
\end{equation*}%
and, in turn,%
\begin{equation*}
1\leq \liminf_{x\uparrow \infty }\left( \frac{e^{\frac{1}{1-\gamma }%
(h^{(-1)}(x,t_{0})-\frac{1}{2}\frac{1}{1-\gamma }t_{0})}}{xh^{(-1)}(x,t_{0})}%
\frac{h^{(-1)}(x,t_{0})}{h^{(-1)}(x,t_{0})-\frac{1}{1-\gamma }t_{0}}\right)
\end{equation*}%
\begin{equation*}
=\liminf_{x\uparrow \infty }\frac{e^{\frac{1}{1-\gamma }(h^{(-1)}(x,t_{0})-%
\frac{1}{2}\frac{1}{1-\gamma }t_{0})}}{xh^{(-1)}(x,t_{0})}\lim_{x\uparrow
\infty }\frac{h^{(-1)}(x,t_{0})}{h^{(-1)}(x,t_{0})-\frac{1}{1-\gamma }t_{0}}
\end{equation*}%
\begin{equation}
=\liminf_{x\uparrow \infty }\frac{e^{\frac{1}{1-\gamma }(h^{(-1)}(x,t_{0})-%
\frac{1}{2}\frac{1}{1-\gamma }t_{0})}}{xh^{(-1)}(x,t_{0})}.  \label{xliminf}
\end{equation}%
Similarly, we use that, $\ $for $~a<b<0,$ 
\begin{equation}
\Phi (b)-\Phi (a)\geq \frac{\varphi (a)-\varphi (b)}{a},\   \label{inequ3}
\end{equation}%
and deduce from (\ref{x-long}) that, for large $x,$%
\begin{equation*}
1\geq \frac{1}{x\sqrt{t_{0}}}e^{\frac{h^{(-1)}(x,t_{0})^{2}}{2t_{0}}}\frac{1%
}{a\sqrt{t_{0}}-\frac{h^{(-1)}(x,t_{0})}{\sqrt{t_{0}}}}
\end{equation*}%
\begin{equation*}
\times \left( \varphi (a\sqrt{t_{0}}-\frac{h^{(-1)}(x,t_{0})}{\sqrt{t_{0}}}%
)-\varphi (\frac{1}{1-\gamma }\sqrt{t_{0}}-\frac{h^{(-1)}(x,t_{0})}{\sqrt{%
t_{0}}})\right)
\end{equation*}%
\begin{equation*}
=\frac{e^{\frac{1}{1-\gamma }(h^{(-1)}(x,t_{0})-\frac{1}{2}\frac{1}{1-\gamma 
}t_{0})}}{{x(h^{(-1)}(x,t}_{0}{)-at_{0})}}-\frac{e^{a(h^{(-1)}(x,t_{0})-%
\frac{1}{2}at_{0})}}{{x(h^{(-1)}(x,t}_{0}{)-at_{0})}}.
\end{equation*}%
For the second term, we have 
\begin{equation*}
\lim_{x\uparrow \infty }\frac{e^{ah^{(-1)}(x,t_{0})}}{{x(h^{(-1)}(x,t}_{0}{%
)-at_{0})}}e^{-\frac{1}{2}at_{0}}=\lim_{x\uparrow \infty }\frac{%
e^{ah^{(-1)}(x,t_{0})-\ln x}}{{h^{(-1)}(x,t}_{0}{)-at_{0}}}e^{-\frac{1}{2}%
at_{0}}
\end{equation*}%
\begin{equation*}
=\lim_{x\uparrow \infty }\exp \left( a\ln x\left( \frac{h^{\left( -1\right)
}\left( x,t_{0}\right) }{\ln x}-\frac{1}{a}\right) \right) \frac{1}{{%
h^{(-1)}(x,t}_{0}{)-at_{0}}}=0.
\end{equation*}%
Therefore, 
\begin{equation*}
\lim \sup_{x\uparrow \infty }\left( \frac{e^{\frac{1}{1-\gamma }%
(h^{(-1)}(x,t_{0})-\frac{1}{2}\frac{1}{1-\gamma }t_{0})}}{{x(h^{(-1)}(x,t}%
_{0}{)-at_{0})}}-\frac{e^{a(h^{(-1)}(x,t_{0})-\frac{1}{2}at_{0})}}{{%
x(h^{(-1)}(x,t}_{0}{)-at_{0})}}\right)
\end{equation*}%
\begin{equation*}
=\lim \sup_{x\uparrow \infty }\frac{e^{\frac{1}{1-\gamma }(h^{(-1)}(x,t_{0})-%
\frac{1}{2}\frac{1}{1-\gamma }t_{0})}}{{x(h^{(-1)}(x,t}_{0}{)-at_{0})}}%
-\lim_{x\uparrow \infty }\frac{e^{a(h^{(-1)}(x,t_{0})-\frac{1}{2}at_{0})}}{{%
x(h^{(-1)}(x,t}_{0}{)-at_{0})}}
\end{equation*}%
\begin{equation*}
=\lim \sup_{x\uparrow \infty }\frac{e^{\frac{1}{1-\gamma }(h^{(-1)}(x,t_{0})-%
\frac{1}{2}\frac{1}{1-\gamma }t_{0})}}{{x(h^{(-1)}(x,t}_{0}{)-at_{0})}}
\end{equation*}%
\begin{equation*}
=\lim \sup_{x\uparrow \infty }\frac{e^{\frac{1}{1-\gamma }(h^{(-1)}(x,t_{0})-%
\frac{1}{2}\frac{1}{1-\gamma }t_{0})}}{x{h^{(-1)}(x,t}_{0}{)}}%
\lim_{x\uparrow \infty }\frac{x{h^{(-1)}(x,t}_{0}{)}}{x{(h^{(-1)}(x,t}_{0}{%
)-at_{0})}}
\end{equation*}%
\begin{equation}
=\limsup_{x\uparrow \infty }\frac{e^{\frac{1}{1-\gamma }(h^{(-1)}(x,t_{0})-%
\frac{1}{2}\frac{1}{1-\gamma }t_{0})}}{xh^{(-1)}(x,t_{0})}\leq 1.
\label{xlimsup}
\end{equation}%
From (\ref{xliminf}) and (\ref{xlimsup}), we then obtain%
\begin{equation*}
\limsup_{x\uparrow \infty }\frac{e^{\frac{1}{1-\gamma }(h^{(-1)}(x,t_{0})-%
\frac{1}{2}\frac{1}{1-\gamma }t_{0})}}{xh^{(-1)}(x,t_{0})}=1,
\end{equation*}%
which together with (\ref{xliminf}) gives (\ref{xlim}). Taking the logarithm
of both sides then yields 
\begin{equation*}
\lim_{x\uparrow \infty }\left( \frac{1}{1-\gamma }\left( h^{(-1)}(x,t_{0})-%
\frac{1}{2\left( 1-\gamma \right) }t_{0}\right) -\ln x-\ln \ln x\right) =\ln
\left( 1-\gamma \right) ,
\end{equation*}%
and the spatial asymptotic expansion (\ref{x-expansion1}) follows.

\subsubsection{Spatial asymptotics of $r(x,t_{0})$ for large $x$}

Let $t_{0}>0.$ We show that as $x\uparrow \infty ,$ the spatial asymptotic
expansion of $r(x,t_{0})$ is given by 
\begin{equation*}
r(x,t_{0})=\frac{1-\gamma }{t_{0}}x\ln \ln x+\frac{1}{t_{0}}\left( \left(
1-\gamma \right) x\ln x\right) ^{a\left( 1-\gamma \right) }e^{\frac{1}{2}a(%
\frac{1}{1-\gamma }-a)t_{0}}
\end{equation*}%
\begin{equation}
+\frac{1}{2\left( 1-\gamma \right) }x-\frac{1-\gamma }{t_{0}}x\ln \frac{1}{%
1-\gamma }+o(1).  \label{space expansion of r}
\end{equation}%
Indeed, from (\ref{r-h}) and (\ref{h1}), we have 
\begin{equation*}
r(x,t_{0})=\int_{a}^{\frac{1}{1-\gamma }}\frac{1}{t_{0}}%
(yt_{0}-h^{(-1)}(x,t_{0}))e^{yh^{(-1)}(x,t_{0})-\frac{1}{2}y^{2}t_{0}}dy
\end{equation*}%
\begin{equation*}
+\frac{h^{(-1)}(x,t_{0})}{t_{0}}\int_{a}^{\frac{1}{1-\gamma }%
}e^{yh^{(-1)}(x,t_{0})-\frac{1}{2}y^{2}t_{0}}dy
\end{equation*}%
\begin{equation*}
=\frac{1}{t_{0}}\left( e^{a(h^{(-1)}(x,t_{0})-\frac{1}{2}at_{0})}-e^{\frac{1%
}{1-\gamma }(h^{(-1)}(x,t_{0})-\frac{1}{2(1-\gamma )}t_{0})}\right) +\frac{%
h^{(-1)}(x,t_{0})}{t_{0}}x,
\end{equation*}%
where we used (\ref{x-h})\ for the last term. Then, (\ref{space expansion of
r}) follows using (\ref{x-expansion1}).\newline

For $t_{0}=0$, we have from (\ref{A(x,0)}) that 
\begin{equation*}
r(x,0)=\frac{1}{1-\gamma }x-\frac{(\frac{1}{1-\gamma }-a)e^{ah^{(-1)}(x,0)}-x%
}{h^{(-1)}(x,0)},
\end{equation*}%
and, for large $x,$ 
\begin{equation}
r(x,0)=\frac{1}{1-\gamma }x\left( 1-\frac{1}{\ln x}\right) +o(1).
\label{space expansion of r for t=0}
\end{equation}%
From (\ref{space expansion of r}) and (\ref{space expansion of r for t=0}),
we then obtain that for $t_{0}>0$ and $t_{0}=0,$ we have respectively, 
\begin{equation*}
r(x,t_{0})\sim \frac{1-\gamma }{t_{0}}x\ln \ln x\text{ \  \  \  \ and \  \  \ }%
r(x,0)\text{ }\sim \text{ }\frac{1}{1-\gamma }x.
\end{equation*}%
Therefore, the risk tolerance function does \textit{not }have the spatial
turnpike property (\ref{x-lim of r(x,t)}). Recall that the underlying
measure \textit{lacks} a Dirac mass on the right boundary of the measure $%
\mu ,$ which is a necessary condition for the results in Proposition 3 to
hold.

\begin{itemize}
\item The case $a=0^{+}$
\end{itemize}

We conclude with the case that $\mu $ is the Lebesgue measure on $(0,\frac{1%
}{1-\gamma }]$. For $t_{0}\geq 0,$ we easily obtain the same spatial
asymptotic expansions of $h^{(-1)}(x,t_{0})$ as in (\ref{x-expansion1}) and
of $r(x,t_{0})$ as in (\ref{space expansion of r}) and (\ref{space expansion
of r for t=0}).

For the temporal expansion, we claim that as $t\uparrow \infty ,$ 
\begin{equation}
\frac{h^{(-1)}(x_{0},t)}{t}=\frac{\sqrt{\ln t+2\ln x_{0}-\ln 2\pi }}{\sqrt{t}%
}+o(\frac{1}{\sqrt{t}}).  \label{final}
\end{equation}%
To see this, first recall (cf. (\ref{x-h})) that 
\begin{equation}
x_{0}=\int_{0^{+}}^{\frac{1}{1-\gamma }}e^{y(h^{(-1)}(x_{0},t)-\frac{1}{2}%
yt)}dy,  \label{aux-3}
\end{equation}%
Taking the logarithm of both sides of (\ref{aux-3}) yields 
\begin{equation}
2\ln x_{0}=\left( \frac{h^{(-1)}(x_{0},t)}{\sqrt{t}}\right) ^{2}-\ln t
\label{aux-4}
\end{equation}%
\begin{equation*}
+2\ln \left( \Phi \left( \sqrt{t}(\frac{1}{1-\gamma }-\frac{h^{(-1)}(x_{0},t)%
}{t})\right) -\Phi \left( -\frac{h^{(-1)}(x_{0},t)}{\sqrt{t}}\right) \right)
.
\end{equation*}%
Next, we claim that $l:=\liminf_{t\uparrow \infty }\frac{h^{(-1)}(x_{0},t)}{%
\sqrt{t}}=\infty .$ Indeed, if $l<\infty $, then, as $t\uparrow \infty $,
the above yields 
\begin{equation*}
2\ln x_{0}=l^{2}-\lim_{t\uparrow \infty }\left( \ln t\right) +2\ln (1-\Phi
(-l))=-\infty ,
\end{equation*}%
which is a contradiction. Therefore, it must be that $l=\infty ,$ which
combined with the fact that $\lim_{t\uparrow \infty }\frac{h^{\left(
-1\right) }\left( x_{0},t\right) }{t}=0,$ implies that as $t\uparrow \infty $%
, the third term on the right hand side of (\ref{aux-4}) converges to $2\ln 
\sqrt{2\pi }$. Thus, we obtain 
\begin{equation*}
2\ln x_{0}=\lim_{t\uparrow \infty }\left( \left( \frac{h^{(-1)}(x_{0},t)}{%
\sqrt{t}}\right) ^{2}-\ln t+2\ln \sqrt{2\pi }\right) ,
\end{equation*}%
from which we deduce that $h^{(-1)}(x_{0},t)=\sqrt{t(\ln t+2\ln x_{0}-\ln
2\pi )}+o(\sqrt{t}),$ and (\ref{final}) follows.

\section{Extensions}

We have analyzed the spatial and temporal asymptotic behavior of the risk
tolerance function $r\left( x,t\right) $. We recall that the optimal
portfolio process $\pi _{t}^{\ast ,x}$ is given in the feedback form $\pi
_{t}^{\ast ,x}=\sigma _{t}^{+}\lambda _{t}r\left( X_{t}^{\ast ,x},t\right) ,$
with $X_{t}^{\ast ,x}$ being the wealth generated by it. Furthermore, it was
shown in \cite{MZ10a} that $X_{t}^{\ast ,x}$ and $\pi _{t}^{\ast ,x}$ are
given in the closed form 
\begin{equation*}
X_{t}^{\ast ,x}=h\left( h^{\left( -1\right) }\left( x,0\right)
+A_{t}+M_{t},A_{t}\right) ,\text{ }\pi _{t}^{\ast ,x}=\sigma _{t}^{+}\lambda
_{t}h_{x}\left( h^{\left( -1\right) }\left( x,0\right)
+A_{t}+M_{t},A_{t}\right) .
\end{equation*}%
It is then natural to investigate the long-term limits $\lim_{t\uparrow
\infty }X_{t}^{\ast ,x},$ $\lim_{t\uparrow \infty }\pi _{t}^{\ast ,x}$ under
asymptotic assumptions on the initial datum and the results obtained herein.
The asymptotic behavior of these processes has been investigated in \cite%
{GKRX14} for the classical setting.

In a different direction, an interesting problem is how to construct
investment policies which yield a targeted long-term wealth distribution. In
a static model, this question was analyzed in \cite{Sharpe} and in the
log-normal, classical and forward cases, in \cite{Monin}. However, in these
settings, there is a strong model commitment, which is a nonrealistic
assumption for long-term portfolio management.

In the forward setting we have analyzed herein, the model is dynamically
updated. Furthermore, the distribution of the optimal wealth is given
explicitly, using the above formula, by 
\begin{equation*}
\mathbb{P}\left( X_{t}^{\ast ,x}\leq y\right) =\mathbb{P}\left( h^{\left(
-1\right) }\left( x,0\right) +A_{t}+M_{t}\leq h^{\left( -1\right) }\left(
y,A_{t}\right) \right) 
\end{equation*}%
\begin{equation*}
=\mathbb{P}\left( \frac{h^{\left( -1\right) }\left( x,0\right) }{%
\left \langle M\right \rangle _{t}}+1+\frac{M_{t}}{\left \langle M\right \rangle
_{t}}\leq \frac{h^{\left( -1\right) }\left( y,A_{t}\right) }{\left \langle
M\right \rangle _{t}}\right) ,
\end{equation*}%
where we used that $A_{t}=\left \langle M\right \rangle _{t}$ (cf. (\ref%
{market input})). Therefore, one expects that the limit (\ref{limit-a/2}) as
well as results on strong law of large numbers for martingales can be used
to study the long-term distribution of the optimal processes. Such questions
are currently investigated by the authors in \cite{geng-z} and others.





\end{document}